\newcounter{linenum}
\newcommand{\ket}[1]{\left\lvert #1 \right\rangle}
\newcommand{\bra}[1]{\left\langle #1 \right\lvert}
\newtheorem{theorem}{Theorem}
\newtheorem{lemma}{Lemma}
\newtheorem{corollary}{Corollary}
\newtheorem{definition}{Definition}
\begin{document}
\allowdisplaybreaks[3]
\frenchspacing

\hyphenation{avenues}
\hyphenation{research}
\hyphenation{interact}
\hyphenation{machine}
\hyphenation{machines}
\hyphenation{giving}
\hyphenation{paper}
\hyphenation{encoun-ter}
\hyphenation{encoun-ters}
\hyphenation{expe-ri-ence}
\hyphenation{analyse}
\hyphenation{analysis}
\hyphenation{remains}
\hyphenation{logical}
\hyphenation{cannot}

\clubpenalty 10000
\widowpenalty 10000

\title{Verified Delegated Quantum Computing with One Pure Qubit}

\author[1]{Theodoros Kapourniotis}
\author[1]{Elham Kashefi}
\author[2]{Animesh Datta}
\affil[1]{School of Informatics, University of Edinburgh\\
  10 Crichton Street, EH8 9AB, UK}
\affil[2]{Clarendon Laboratory, Department of Physics\\
University of Oxford, OX1 3PU, UK}

\maketitle

\begin{abstract}
While building a universal quantum computer remains challenging, devices of restricted power such as the so-called \textit{one pure qubit} model have attracted considerable attention. An important step in the construction of these limited quantum computational devices is the understanding of whether the verification of the computation within these models could be also performed in the restricted scheme. Encoding via blindness (a cryptographic protocol for delegated computing) has proven successful for the verification of universal quantum computation with a restricted verifier. In this paper, we present the adaptation of this approach to the one pure qubit model, and present the first feasible scheme for the verification of delegated one pure qubit model of quantum computing.
 \end{abstract}

\section{Introduction}

The physical realisation of quantum information processing requires the fulfilment of the five criteria collated by DiVincenzo \cite{divincenzo00}. While enormous progress had been made in realising them since, we are still some way from constructing a universal quantum computer. This raises the question whether quantum advantages in computation are possible without fulfilling one or more of DiVincenzo's criteria. From a more foundational perspective, the computational power of the intermediate models of computation are of great value and interest in understanding the computational complexity of physical systems. Several such models are known, including fermionic quantum computation \cite{BravyiKitaev02}, instantaneous quantum computation \cite{BJS11}, permutational quantum computation \cite{jordan10}, and boson sampling \cite{AA11}.

Deeply entwined with the construction of a quantum information processor is the issue of its verification. How do we convince ourselves that the output of a certain computation is correct and obtained using quantum-enhanced means. Depending on a given computation, one or both may be non-trivial. For instance, the correctness of the output of Shor's factoring algorithm~\cite{Shor} can be checked efficiently on a classical machine, but in general this is not known to be possible for all problems solvable by a quantum computer.
%Some intermediate models do not solve decision problems, but others kinds like sampling problems such as boson sampling, or simulations of physical processes. In such cases, checking the correctness of the output is a challenge.
On the other hand, by allowing a small degree of quantumness to the verifier~\cite{Dorit,fitzsimons2012unconditionally}, or considering entangled non-commuting provers~\cite{RUV13}, the verification problem has been solved for universal quantum computation. However, not much attention has been given to verifying restricted models of quantum-enhanced computation. It is in this direction that we endeavour to embark.

One of the earliest restricted models of quantum computation was proposed by Knill and Laflamme, named `Deterministic Quantum Computation with One quantum bit (DQC1)', also referred to as the one pure qubit model~\cite{kl98}. It addresses the challenge of DiVincenzo's first criterion, that of preparing a pure quantum input state, usually the state of $n$ separate qubits in the computational basis state zero. Instead, in the DQC1 model, only one qubit is prepared in a pure state (computational basis zero state) and the rest of the input qubits exist in the maximally mixed state. This model corresponds to a noisier, more feasible experimental setting and was initially motivated by liquid-state NMR proposals for quantum computing. The DQC1 model was shown to be capable of estimating the coefficients of the Pauli operator expansion efficiently. Following this, Shepherd defined the complexity class `Bounded-error Quantum 1-pure-qubit Polynomial-time (BQ1P)', to capture the power of the DQC1 model~\cite{s06}, and proved that a special case of Pauli operator expansion, the problem of estimating the normalised trace of a unitary matrix to be complete for this class. This problem, and others that can be reduced to it, such as the estimation of the value of the Jones polynomial (see Ref.~\cite{dattashaji11} for more such connections), is interesting from a complexity theoretical point of view since it has no known efficient classical algorithm. Moreover they are not known to belong to the class NP, therefore the problem of verifying the correctness of the result is non-trivial. More recently, it was shown that an ability to simulate classically efficiently a slightly modified version of this model would lead to the collapse of the polynomial hierarchy to the third level \cite{morimae2013hardness}.
%The estimation of fidelity decay in chaos, the estimation of the value of the Jones and other knot-invariant polynomial at the fifth root of unity for the trace closure of knots, partition functions of spin models, enumeration of quadratically signed weight enumerators and other interesting problems can be reduced to the estimation of the normalised trace of a unitary matrix. For more on such connections, see Ref.~\cite{dattashaji11}.

The approach of the Verifiable Universal Blind Quantum Computing (VUBQC) \cite{fitzsimons2012unconditionally} is based on the intermediate step of blind computing, a cryptographic protocol where a restricted client runs the desired computation on a powerful server, such that the server does not learn anything about the delegated computation. A protocol for universal blind quantum computation with a client able to prepare only single qubits, based on Measurement-based Quantum Computing (MBQC) \cite{RB01} model was introduced in~\cite{BFK09}. Here, we take the same approach towards verification by first adapting this existing protocol for blind computing to the DQC1 model. Thus, the first goal is to define what it means to have a DQC1 computation in the MBQC setting. Fixing the input state to almost maximally mixed as it is done in the circuit picture of the DQC1 model does not suffice since the required auxiliary qubits for MBQC could potentially increase the number of pure qubits in the system by more than a logarithmic amount~\footnote{Increasing the number of pure qubits in the input to the order of logarithmic in the size of the computation is shown not to add extra power to the one pure qubit complexity class \cite{s06}.}. This adaptation is necessary as currently all the optimal schemes \cite{Dorit,BFK09,DKL11,MDK11,BKBFZW11,FM12,Morimae12,MF12b,SKM13,MPF13,GMMR13} for the blind computation exploit the possibility of adaptive computation based on the measurement, a freedom not allowed in the original DQC1 model. The main results presented in this paper are the following.

\begin{itemize}
\item We introduce a new definition of DQC1 computation within the MBQC framework, called the DQC1-MBQC model~\footnote{We use a different acronym than DQC1 to emphasis the structural distinction with the standard DQC1 model.}, which captures the essential property of its original definition in the circuit model. Moreover, we show that the original definition of complexity class BQ1P is contained in DQC1-MBQC, where the latter is able to capture the process where new qubits are introduced or traced out during the execution of the computation.
\item  We provide a sufficient condition for a graph state (underlying resource for an MBQC computation \cite{HEB04}) to be usable within DQC1-MBQC. A direct consequence of this is that the universal blind protocol, which satisfies this condition, can be directly adapted to the setting where the server is a DQC1-MBQC machine and the client is able to send one single qubits at a time.
\item Building on the blind protocol and adapting the methods presented in \cite{fitzsimons2012unconditionally}, a verification protocol for the class DQC1-MBQC using a verifier restricted to DQC1-MBQC is given, where the probability of the client being forced to accept an incorrect result can be adjusted by setting the security parameter of the model. Since the protocol of \cite{fitzsimons2012unconditionally} does not satisfy the sufficient condition and hence not runnable in the DQC1-MBQC, an alternative method is presented which also leads to different complexity results.
\end{itemize}

\subsection{Preliminaries}

We first introduce the notation necessary to describe a computation in MBQC \cite{RB01,Mcal07}. A generic pattern, consists of a sequence of commands acting on qubits:
\begin{itemize}
\item $N_i(\ket{q})$: Prepare the single auxiliary qubit~$i$ in the state $\ket{q}$;

\item $E_{i,j}$: Apply entangling operator controlled-$Z$ to qubits $i$ and $j$;

\item $M^{\alpha}_i$: Measure qubit $i$ in the basis $\{ \frac{1}{\sqrt{2}} ( \ket{0} + e^{i \alpha} \ket{1} ), \frac{1}{\sqrt{2}} ( \ket{0} - e^{i \alpha} \ket{1} ) \}$ followed by trace out the measured qubit. The result of measurement of qubit $i$ is called signal and is denoted by $s_i$;

\item $X_i^{s_j}, Z_i^{s_j}$: Apply a Pauli $X$ or $Z$ correction on qubit $i$ depending on the result $s_j$ of the measurement on the $j$-th qubit.

\end{itemize}
The corrections could be combined with measurements to perform `adaptive measurements' denoted as $\prescript{s_z}{}{\left[ M^{\alpha}_i \right]}^{s_x} = M^{(-1)^{s_x}\alpha+s_z \pi}_i $. A pattern is formally defined by the choice of a finite set $V$ of qubits, two not necessarily disjoint sets $I \subset V$ and $O \subset V$ determining the pattern inputs and outputs, and a finite sequence of commands acting on $V$.

\begin{definition}\label{def:run} \cite{Flow06} A pattern is said to be runnable if
 \begin{enumerate}
 \item[(R0)] no command depends on an outcome not yet measured;
 \item[(R1)] no command (except the preparation) acts on a measured or not yet prepared qubit;
 \item[(R2)] a qubit is measured (prepared) if and only if it is not an output (input).
 \end{enumerate}
 \end{definition}

 The entangling commands $E_{i,j}$ define an undirected graph over $V$ referred to as $(G,I,O)$. Along with the pattern we define a partial order of measurements and a dependency function $d$ which is a partial function from $O^C$ to $\mathcal{P}^{I^C}$, where $\mathcal{P}$ denotes the power set. Then, $j \in d (i)$ if $j$ gets a correction depending on the measurement outcome of $i$. In what follows, we will focus on patterns that realise (strongly) deterministic computation, which means that the pattern implements a unitary on the input up to a global phase. A sufficient condition on the geometry of the graph state to allow unitary computation is given in \cite{Flow06,gflow07} and will be used later in this paper. In what follows, $x \sim y$ denotes that  $x$ is adjacent to $y$ in $G$.

\begin{definition} \label{d-flow} \cite{Flow06} A \emph{flow} $(f,\preceq)$ for a geometry $(G,I,O)$ consists of a map $f:O^c\mapsto I^c$ and a partial order $\preceq$ over $V$ such that for all $x\in O^c$
\begin{itemize}
\item[(F0)]~~$x \sim f(x)$;
\item[(F1)]~~$x \preceq f(x)$;
\item[(F2)]~~for all $y\not = x, y \sim f(x)$:  $x \preceq y$\,.
\end{itemize}
\end{definition}

\subsection{Main results}

\subsubsection{DQC1-MBQC}

We define the class BQ1P formally as introduced by Shepherd~\cite{s06}, and then recast it into the MBQC framework.

\begin{definition}[Bounded-error Quantum 1-pure-qubit Polynomial-time complexity class ] \label{def-BQ1P} \cite{s06} BQ1P is defined using a bounded-error uniform family of quantum circuits -- DQC1. A DQC1 circuit takes as input a classical string $\boldsymbol{x}$, of size $n$, which encodes a fixed choice of unitary operators applied on a standard input state $\ket{0}\bra{0} \otimes I_{w-1} / 2^{w-1}$. The width of the circuit $w$ is polynomially bounded in $n$. Let $Q_n (x)$ be the result of measuring the first qubit of the final state of a DQC1 circuit. A language in BQ1P is defined by the following rule:
\begin{gather*}
\forall a \in L : Pr(Q_n (a)=1) \ge \frac{1}{2}+\frac{1}{2q(n)} \\
\forall a \notin L : Pr(Q_n (a)=1) \le \frac{1}{2} - \frac{1}{2q(n)}
\end{gather*}
for some polynomially bounded q(n).
\end{definition}

The essential physical property of DQC1 that we mean to preserve in DQC1-MBQC is its limited purity. To capture this we introduce the \emph{purity parameter}:
\begin{equation}
\pi(\rho) = \log_2{ (\text{Tr}(\rho^2))} + d,
\end{equation}
where $d$ is the logarithm of the dimension of the state $\rho$. For a DQC1 circuit with $k$ pure qubits, at each state of the computation the value of purity parameter $\pi$ for that state remains constant equal to $k$. In fact, Shepherd showed that the class BQ1P is not extended by increasing the number of pure input qubits logarithmically. Thus, a purity that does not scale too rapidly with the problem size still remains in the same complexity class.

A characterisation of MBQC patterns compatible with the idea of the DQC1 model as introduced above is presented next. Any MBQC pattern is called DQC1-MBQC when there exists a runnable rewriting of this pattern such that after every elementary operation (for any possible branching of the pattern) the purity parameter $\pi$ does not increase over a fixed constant. We assume that the system at the beginning has only the input state and at the end has only the output state.

We define a new complexity class that captures the idea of one pure qubit computation in the MBQC model. This complexity class, that we name DQC1-MBQC, can be based on any universal DQC1-MBQC resource pattern, which is defined analogously to the DQC1 circuits~\cite{s06} as a pattern that can be adapted to execute any DQC1-MBQC pattern of polynomial size. A particular example of such a resource, as we will present later, can be built using the the brickwork state of \cite{BFK09} designed for the purpose of universal blind quantum computing. The input to a universal pattern is the description of a computation as a measurement angle vector and is used to classically control the measurements of the MBQC pattern.  The quantum input of the open graph is always fixed to a mostly maximally mixed state, in correspondence to the DQC1 model.

\begin{definition}
A language in DQC1-MBQC complexity class is defined based on a universal DQC1-MBQC resource pattern $P_{\alpha}$ that takes as input an angle vector $\boldsymbol{\alpha}$ of size $n$ and is applied on the quantum state $\ket{+}\bra{+} \otimes I_{w-1}/2^{w-1}$, $w \in O(n)$. A word $\boldsymbol{\alpha}$ belongs to the language depending of the probabilities of the measurement outcome ($R_n (\boldsymbol{\alpha})$) of the first output qubit of pattern $P_{\alpha}$ which are defined identically to Definition \ref{def-BQ1P}:
\begin{gather*}
\forall a \in L : Pr(R_n (\alpha)=1) \ge \frac{1}{2}+\frac{1}{2r(n)} \\
\forall a \notin L : Pr(R_n (\alpha)=1) \le \frac{1}{2} - \frac{1}{2r(n)}
\end{gather*}
for some polynomially bounded r(n).
\end{definition}

\begin{corollary}
BQ1P $\subseteq$ DQC1-MQBC.
\end{corollary}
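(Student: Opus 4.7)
The plan is as follows. Given $L \in$ BQ1P, let $C_n$ be the DQC1 circuit of width $w = \mathrm{poly}(n)$ witnessing $L$ on input $\boldsymbol{x}$, acting on $\ket{0}\bra{0}\otimes I/2^{w-1}$. I will construct a DQC1-MBQC pattern $P_n$ on the brickwork state of~\cite{BFK09}, whose input angle vector $\boldsymbol{\alpha}$ encodes the classical description of $C_n$ together with $\boldsymbol{x}$ and whose first output qubit reproduces the acceptance probability $Q_n(\boldsymbol{x})$. The brickwork state is universal for MBQC and admits a flow $(f,\preceq)$ satisfying Definition~\ref{d-flow}, which is what lets us execute the pattern gate by gate while respecting runnability (Definition~\ref{def:run}).

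First, compile $C_n$ into a standard MBQC pattern $P_n$ by replacing each gate from a universal gate set with its measurement-based gadget on the brickwork; each gadget involves $O(1)$ auxiliary qubits. To accommodate the DQC1-MBQC input $\ket{+}\bra{+}\otimes I/2^{w-1}$ in place of the DQC1 input $\ket{0}\bra{0}\otimes I/2^{w-1}$, absorb a leading Hadamard on the first logical qubit into the first measurement angle of the pattern. Correctness of the output distribution on the first output qubit is then the standard MBQC-to-circuit simulation argument under the flow-induced corrections.

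The main step is to exhibit a runnable rewriting of $P_n$ that satisfies the DQC1-MBQC purity constraint. Order the commands consistently with the partial order $\preceq$ induced by the flow, and for each successive gadget process the commands in the order: (i) prepare the $O(1)$ new $\ket{+}$ ancillas one at a time, (ii) apply the gadget's $E_{i,j}$ operations, (iii) adaptively measure and trace out each qubit as soon as its flow image has been prepared, before moving to the next gadget. Analyse the purity parameter $\pi$ branch by branch. At the start of any gadget the active frontier is the $w$ logical qubits carrying a state of the form (pure)$\otimes$(maximally mixed), so $\pi=1$. Each $\ket{+}$ preparation increases $\pi$ by exactly $1$; the CZs preserve $\pi$ because they are unitary; and each adaptive measurement followed by partial trace, once conditioned on its outcome, returns the frontier to $w$ qubits carrying the updated logical state, still with $\pi=1$. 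Hence throughout the gadget $\pi$ is bounded by $1+k$, where $k=O(1)$ is the maximum gadget size in the universal set, independently of $n$.

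The main obstacle is exactly this per-branch purity bookkeeping in the scheduling step: a naive ordering that prepares the whole brickwork first and only then measures would push $\pi$ up to $\Omega(wT)$ and destroy the DQC1-MBQC property, so it is essential that the flow-consistent ordering interleaves preparations with measurements tightly enough that the $O(1)$ ancillas of each gadget are traced out before the next gadget begins. Once this is verified, $P_n$ is a DQC1-MBQC pattern whose first output qubit satisfies $\Pr(R_n(\boldsymbol{\alpha})=1)=Q_n(\boldsymbol{x})$, so the acceptance condition of Definition~\ref{def-BQ1P} transfers verbatim to the DQC1-MBQC acceptance condition and places $L$ in DQC1-MBQC.
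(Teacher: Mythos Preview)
Your approach is essentially the paper's: compile the DQC1 circuit onto the brickwork state via~\cite{BFK09} and observe that the resulting pattern on the almost-maximally-mixed input is a DQC1-MBQC pattern; the paper's three-line proof simply defers the purity bookkeeping to Theorem~\ref{t-compact}, whereas you carry it out inline at the gadget level and also handle the $\ket{0}$-versus-$\ket{+}$ input conversion explicitly, which the paper glosses over. One small slip worth fixing: after the first gadget the frontier state is no longer of the product form ``(pure)$\otimes$(maximally mixed)'' but rather $U(\ket{+}\bra{+}\otimes I/2^{w-1})U^\dagger$; your conclusion $\pi=1$ is still correct, but because unitaries preserve $\pi$, not because of any product structure.
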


\begin{proof}
Any circuit description using a fixed set of gates can be efficiently translated into a measurement pattern applicable on the brickwork state. A specific example of translating each gate from the universal set \{Hadamard, $\pi /8$, c-NOT\} to a `brick' element of the brickwork state is given in~\cite{BFK09}. The quantum input state in the resulting measurement pattern is in the almost-maximally-mixed state, therefore the pattern is a valid DQC1-MBQC pattern.
\end{proof}

\begin{definition} \label{def:OCQ}
An MBQC pattern is a DQC1-MBQC pattern if there is a runnable sequence of commands where for every elementary command $i$ and any measurement outcomes, there exists a fixed constant value $c$ such that the overall quantum state of the system ($\rho_i$ with dimension $d_i$) after the $i$th operation satisfies the following relation
\begin{equation*}
\pi(\rho_i) < \pi(\rho_{in}) + c
\end{equation*}
Where $\rho_{in}$ is the quantum input of the pattern with dimension $d_{in}$, which is fixed to be the product of $c_{in}$ (constant) pure qubits and a maximally mixed state of $d_{in}-c_{in}$ qubits.
\end{definition}

The above definition captures the essence of DQC1 in that it maintains a low purity, high entropy state in MBQC, in contrast to DiVincenzo's first criterion. We derive a sufficient condition (that is also constructive) for the open graph state leading to DQC1-MBQC, capturing the universal blind quantum computing protocol as a special case. However, a general characterisation and further structural link with determinism in MBQC~\cite{Flow06,gflow07,fFlow10} is left as an open question for future work.

\begin{theorem} \label{t-compact}
Any measurement pattern on an open graph state $(G,I,O)$ with flow $(f,\preceq)$ (as defined in Definition \ref{d-flow}) and measurement angles $\boldsymbol{\alpha}$ where either $|I| = |O|$ or the flow function is surjective and all auxiliary preparations are on the $(X-Y)$ plane represents a DQC1-MBQC pattern.
\end{theorem}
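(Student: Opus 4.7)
The plan is to build a schedule using the flow that processes one non-output vertex at a time, preparing at most one pure ancilla and then immediately measuring and tracing out one data qubit, so that the active system dimension oscillates by one. I extend $\preceq$ to a linear extension and, for each $x \in O^c$ in this order, perform the round: (i) prepare $f(x)$ in the $(X-Y)$-plane state $\ket{+_{\theta_{f(x)}}}$ (as the MBQC auxiliary convention prescribes, made explicit in the surjective-flow case) unless $f(x)$ has already been prepared in an earlier round; (ii) apply every pending entanglement $E_{a,b}$ whose endpoints are now both prepared, in particular $E_{x,f(x)}$; (iii) apply the signal-driven Pauli corrections feeding $x$'s angle; (iv) measure $x$ adaptively and trace it out. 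The processing order together with F1 ($x \preceq f(x)$) makes all measurement dependencies point forward in time, so runnability (R0)--(R2) holds.

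The key structural fact is that $f$ is injective: if $f(x_1) = f(x_2) = y$ with $x_1 \neq x_2$, then both $x_i$ are neighbors of $y$ and F2 forces $x_1 \preceq x_2$ and $x_2 \preceq x_1$, contradicting antisymmetry of $\preceq$. In the $|I|=|O|$ case, $|O^c|=|I^c|$ upgrades injectivity to bijectivity; in the surjective case bijectivity is explicit. Each round therefore introduces exactly one fresh ancilla and consumes exactly one qubit, and the working dimension after each round equals that before it.

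Tracking $\pi$ across a round: preparation of a pure $\ket{+_\theta}$ adds exactly $+1$; entanglements and Pauli corrections are unitary and hence $\pi$-preserving; and at the moment $x$ is measured, F2 guarantees that $f(x)$ has been entangled only with $x$ (its other neighbors lie strictly later in $\preceq$ and their edges to $f(x)$ are deferred to later rounds). The measurement-plus-trace on $x$ is therefore exactly the MBQC one-qubit gate-teleportation primitive, and on every measurement branch the remaining system is in a unitary image of the pre-round state at the same dimension. Hence $\mathrm{Tr}(\rho^2)$ and the dimension both return to their pre-round values, and so does $\pi$. Thus $\pi(\rho_i) \leq \pi(\rho_{in}) + 1$ throughout, verifying Definition \ref{def:OCQ} with $c=1$.

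The main obstacle is the clean teleportation step in the presence of brick-type edges incident to $f(x)$: F2 is exactly what lets these be deferred out of the current round, so the round reduces to the standard one-ancilla MBQC identity. The $(X-Y)$-plane hypothesis on the ancilla (explicit in the surjective case, conventional in the $|I|=|O|$ case) ensures this identity applies with measurement angle $\alpha_x$ and a Pauli byproduct that F1 lets us absorb into the subsequent adaptive measurements.
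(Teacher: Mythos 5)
Your proof is correct and follows essentially the same route as the paper: its Lemma \ref{l-compact} establishes exactly this flow-ordered rewriting (prepare $f(i)$ in $\ket{+}$, entangle forward using (F1)/(F2), adaptively measure and trace out $i$), and the paper's proof of Theorem \ref{t-compact} then performs the same purity bookkeeping --- $+1$ for each pure ancilla, invariance of $\pi$ under unitaries, and each round being a net unitary that retags qubit $i$ as $f(i)$ at unchanged dimension. Your explicit injectivity argument for $f$ via (F2) is a nice addition that the paper leaves implicit in its runnability check.
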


The full details and the proof of this theorem is provided in Section \ref{section_2}.

\subsubsection{Blindness}

A direct consequence of Theorem \ref{t-compact} is that the Universal Blind Computing Protocol (UBQC) introduced in~\cite{BFK09} can be easily adapted to fit within the DQC1-MBQC class, since it is based on an MBQC pattern on a graph state with surjective flow.

In the blind cryptographic setting a client (Alice) wants to delegate the execution of an MBQC pattern to a more powerful server (Bob) and hide the information at the same time. The UBQC protocol is based on the separation of the classical and quantum operations when running an MBQC pattern. The client prepares some randomly rotated quantum states and sends them to the server and from this point on the server executes the quantum operations on them (entangling according to the graph and measuring) and the client calculates the measurement angles for the server and corrects the measurement outcomes she receives (to undo the randomness and get the correct result).

To define blindness formally we allow Bob to deviate from the normal execution in any possible way, and this is captured by modelling his behaviour during the protocol by an arbitrary CPTP map. The main requirement for blindness is that for any input and averaged over all possible choices of parameters by Alice, Bob's final state can always be written as a fixed CPTP map applied on his initial state, thus not offering any new knowledge to him. This definition of stand-alone blindness was presented first in \cite{DFPR13} and takes into account the issue of prior knowledge.

\begin{definition} [Blindness] \label{def:blindness} Let $P$ be a protocol for delegated computation: Alice's input is a description of a computation on a quantum input, which she needs to perform with the aid of Bob and return the correct quantum output. Let $\rho_{AB}$ express the joint initial state of Alice and Bob and   $\sigma_{AB}$ their joint final state, when Bob is allowed to do any deviation from the correct operation during the execution of $P$, averaged over all possible choices of random parameters by Alice. The protocol $P$ is blind iff

\begin{gather}
\forall \rho_{AB} \in \mathcal{L} (\mathcal{H}_{AB}), \exists \mathcal{E}:\mathcal{L}(\mathcal{H}_B) \rightarrow \mathcal{L}(\mathcal{H}_B)  \text{, s.t. } \;\;\; \text{Tr}_A (\sigma_{AB}) = \mathcal{E} (\text{Tr}_A(\rho_{AB})) \label{eq11}
\end{gather}

\end{definition}

To adapt the original UBQC protocol into the DQC1-MBQC setting we change the order of the operations so that the client does not send all the qubits to the server at the beginning, but during the execution of the pattern, following a rewriting of the pattern that is consistent with the purity requirement. The details are described in Section \ref{section_2}.

\begin{theorem}

There exists a blind protocol for any DQC1-MBQC computation where the client is restricted to BPP and the ability to prepare single qubits and the server is within DQC1-MBQC.

\end{theorem}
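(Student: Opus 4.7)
The plan is to take the Universal Blind Quantum Computing (UBQC) protocol of BFK09 essentially unchanged and modify only the scheduling of qubit transmissions so that at every point the state held by Bob respects the purity bound of Definition~3. First I would invoke Theorem~\ref{t-compact} applied to the brickwork state, which has a surjective flow and whose preparations are all in the $(X,Y)$ plane, to obtain a runnable rewriting of the MBQC pattern in which the purity parameter $\pi$ stays within a fixed constant of $\pi(\rho_{in})$ after every elementary command. This rewriting supplies a concrete schedule specifying the order in which qubits are prepared, entangled, measured and traced out on Bob's side.

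Next I would specify the adapted protocol. Alice's cryptographic machinery is identical to UBQC: for each vertex $i$ of the brickwork she samples a uniform angle $\theta_i\in\{0,\pi/4,\ldots,7\pi/4\}$, prepares the single qubit $\frac{1}{\sqrt{2}}(\ket{0}+e^{i\theta_i}\ket{1})$, and for each measured qubit she samples a uniform one-time-pad bit $r_i$ which is folded into the angle $\delta_i$ she dictates to Bob. The sole operational departure from UBQC is that, rather than sending all qubits up front, Alice withholds each one until the schedule derived from Theorem~\ref{t-compact} calls for the preparation command $N_i$; Bob then performs the subsequent $E_{i,j}$ and $M_i^{\delta_i}$ as soon as the schedule allows, and traces measured qubits out immediately. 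Since Alice's angles $\delta_i$ are computed from $\theta_i$, $r_i$ and the previously reported outcomes in precisely the same way as in UBQC, the joint distribution over all transmitted messages is unchanged. Correctness is then inherited from UBQC because the re-ordering respects the partial order $\preceq$ associated with the flow, and blindness (Definition~\ref{def:blindness}) follows by reusing the UBQC blindness proof almost verbatim: conditioned on the transcript up to any point, each newly transmitted qubit is maximally mixed once one averages over the independent $\theta_i$, and each communicated $\delta_i$ is uniformly distributed over the eight angles thanks to the independent $r_i$, so Bob's averaged final state factorises as a fixed CPTP map $\mathcal{E}$ on his initial state as required by equation~(\ref{eq11}).

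The main obstacle I anticipate is reconciling the honest-execution purity requirement of Definition~\ref{def:OCQ} with the cryptographic requirement that blindness hold against arbitrary deviations by Bob. I would handle this by cleanly separating the two claims: the purity bound is a deterministic property of the rewritten pattern, guaranteed by Theorem~\ref{t-compact} whenever Bob follows the schedule, and therefore certifies that the server is within DQC1-MBQC; blindness, on the other hand, depends only on the distribution of Alice's classical messages and transmitted qubits, which is preserved by the re-ordering and inherits from the standard UBQC proof independently of Bob's actual behaviour. A secondary technical point is to verify that the flow-induced ordering never forces Alice to buffer a non-constant number of prepared-but-unsent qubits on her own side, which is a combinatorial check on the schedule rather than a cryptographic one, and follows from the layered structure of the brickwork state together with the surjectivity of $f$.
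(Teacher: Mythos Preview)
Your proposal is correct and follows essentially the same approach as the paper: invoke Theorem~\ref{t-compact} on the brickwork state to obtain a purity-preserving schedule, reschedule UBQC so Alice sends qubits just-in-time according to that rewriting, and then inherit correctness and blindness from UBQC (the paper's Appendix~\ref{appendix_1} does exactly what you describe, collecting Bob's CPTP maps into one and averaging over Alice's randomness to reach the fixed-map form of Definition~\ref{def:blindness}). One small slip in your informal account: it is the uniform $\theta_i$, not the single bit $r_i$, that renders $\delta_i$ uniform over the eight angles, with $r_i$ instead serving to randomise the residual qubit state after the $Z(-\delta_i)$ change of basis; since you explicitly defer to the established UBQC proof this does not undermine the argument.
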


\subsubsection{Verification}

In the verification cryptographic setting a client (Alice) wants to delegate a quantum computation to a more powerful server (Bob) and accept if the result is correct or reject if the result is incorrect (server is behaving dishonestly). The main idea of the original protocol of~\cite{fitzsimons2012unconditionally} is to test Bob's honesty by hiding a trap qubit among the others in the resource state sent to him by Alice. Blindness means that Bob cannot learn the position of the trap, nor its state. During the execution of the pattern Bob is asked to measure this trap qubit and report the result to Alice. If Bob is honest this measurement gives a deterministic result, which can be verified by Alice. Bob being dishonest means that Alice will receive the wrong result with no-zero probability. Depending on that result, Alice accepts or rejects the final output received by Bob.

To define verifiability formally we need to establish an important difference with the original protocol~\cite{fitzsimons2012unconditionally}: In a DQC1-MBQC pattern the quantum input is in a mixed state as opposed to a pure state. Reverting to the original definition that derives from the quantum authentication schemes in~\cite{BCGST02} we need to add an extra reference system R, that is used to purify the mixed input that exists in Alice's private system A. The assumption is that Bob does not learn anything about the reference system (ex. Alice is provided with the quantum input from a third trusted party which also holds the purification). Bob is allowed to choose any possible cheating strategy and our goal is to minimise the probability of Alice accepting the incorrect output of the computation at the end of the protocol.

\begin{definition} \label{def:verifiability}
A protocol for delegated computation is $\epsilon$-verifiable ($0\leq \epsilon < 1$) if for any choice of Bob's strategy $j$, it holds that for any input of Alice:
\begin{equation}
\text{Tr} (\sum_{\nu} p(\nu) P^{\nu} _{\text{incorrect}} B_j (\nu)) \leq \epsilon
\end{equation}
where $B_j(\nu)$ is the state of Alice's system A together with the purification system R at the end of the run of the protocol, for choice of Alice's random parameters $\nu$ and Bob's strategy $j$. If Bob is honest we denote this state by $B_0 (\nu)$. Let $P_{\perp}$ be the projection onto the orthogonal complement of the the correct (purified) quantum output. Then,
\begin{equation}
P^{\nu} _{\text{incorrect}} = P_{\perp}\otimes \ket{\eta^{\nu_c}_t}\bra{\eta^{\nu_c}_t}
\end{equation}
where $ \ket{\eta^{\nu_c}_t}$ is a state that indicates if Alice accept or reject the result (see Section 3).
\end{definition}

A verification protocol should also be correct, which means that in case Bob is honest Alice's state at the end of the run of the protocol is the correct output of the computation and an extra qubit set in the accept state (this property is also referred to as completeness).

In VUBQC, in order to adjust the parameter $\epsilon$ to any arbitrary value between 0 and 1 (a technique called probability amplification), one needs to add an order of polynomial of the input size, many trap qubits within the MBQC pattern. Specifically, adding polynomial traps and incorporating the pattern into a fault tolerant scheme that corrects $d$ errors, gives parameter $\epsilon$ inversely exponential on $d$. As we explain in Section 3, adding a polynomial number of traps, following the same scheme as VUBQC, creates a pattern that is not DQC1-MBQC. Therefore to achieve an amplification of the error probability we need to develop a modified trapping scheme.

In Section \ref{section_3} we give a verification protocol for DQC1-MBQC problems where instead of running the pattern once, $s$ computations of the same size are run in series, one being the actual computation and the others being trap computations. A similar approach is also considered for the restricted setting of the photonic implementation of VUBQC \cite{barz2013experimental} and a verification protocol of the entanglement states \cite{pappa2012multipartite}. In our setting each trap computation contains an isolated trap injected in a random position between the qubits of the pattern. We prove that in this verification protocol the server is within DQC1-MBQC complexity class, while the client is within BPP together with single qubit preparations (as in the original VUBQC). Moreover in this verification protocol we achieve the goal of probability amplification by choosing the appropriate value for parameter $s$.

\begin{theorem} \label{verifiable_main}
There exists a correct $\epsilon$-verifiable protocol where the client is restricted to BPP and the ability to prepare single qubits and the server is within DQC1-MBQC. Using $O(sm)$ qubits and $O(sm)$ time steps, where $m$ is the size of the input computation, we have:
\begin{equation}
\epsilon = \frac{2m}{s}
%\epsilon = \text{max} \{ \frac{1}{s} ( \sum_{k=1}^s c_k k \left(\frac{2m-1}{2m}\right)^{k-1} ) \} \text{, subject to } \sum_{k=1}^{s} c_k \leq 1, \forall k: c_k \geq 0 \nonumber
\end{equation}

\end{theorem}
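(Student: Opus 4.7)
The plan is to construct the protocol explicitly, then verify in turn the four required properties: (i) the server stays inside DQC1-MBQC, (ii) the client stays within BPP plus single-qubit preparations, (iii) correctness when the server is honest, and (iv) the quantitative verifiability bound $\epsilon = 2m/s$. The construction itself adapts the VUBQC trap methodology but distributes traps across $s$ independent runs instead of embedding many traps inside a single run, which is exactly what makes the resource compatible with Theorem \ref{t-compact}.

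First I would describe the protocol precisely. Alice picks a uniformly random index $t\in\{1,\dots,s\}$ to host the real computation, and for every other slot $j\neq t$ she prepares a \emph{trap computation}: a pattern of the same size $m$ that is functionally an identity on a fixed dummy input but contains, at a uniformly random position, an isolated trap qubit in an $(X-Y)$-plane eigenstate rotated by her secret $\theta$. Each of the $s$ runs is executed sequentially on the server using the blind version of the DQC1-MBQC protocol from the previous theorem; between runs the server's workspace is reset to the mostly-maximally-mixed input. At the end Alice outputs the result of slot $t$ and accepts iff every trap measurement in the other $s-1$ slots returned its predicted deterministic value. Because the $s$ runs are run strictly in series and each individual run is already DQC1-MBQC (by Theorem \ref{t-compact}, since isolated trap qubits leave the flow of the underlying brickwork resource intact and all preparations stay in the $(X-Y)$ plane), the purity parameter never exceeds $\pi(\rho_{in})+c$ at any point of the global execution. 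This takes care of property (i); Alice's side is manifestly classical plus single-qubit preparations, giving (ii).

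Correctness (iii) follows from the flow of the brickwork state and the determinism of an isolated trap measurement: on an honest server every trap yields the predicted outcome, so Alice always accepts and outputs the correct result of the real computation. The main work is in (iv). By blindness of the underlying per-run protocol, the joint view of Bob across the $s$ runs is independent of $t$ and of the trap positions $\{r_j\}_{j\neq t}$. Following the VUBQC-style analysis, I would expand Bob's arbitrary CPTP deviation into a Pauli basis on the qubits of the sequential brickwork resource; by symmetrisation over Alice's hidden one-time pads only the diagonal Pauli attacks contribute. For each fixed Pauli deviation $E$, let $A_j\subseteq[m]$ denote the set of qubits in slot $j$ on which $E$ acts non-trivially. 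The attack can produce an incorrect accepted output only if: (a) slot $t$ is hit, i.e. $A_t\neq\emptyset$, and (b) for every trap slot $j\neq t$ the random trap position $r_j\notin A_j$ (otherwise the trap would flip its deterministic outcome with probability at least $1/2$, actually $1$ for a well-chosen basis). Averaging over the uniform choice of $t$ and of $(r_j)_j$, and using that Bob's distribution over $A_1,\dots,A_s$ is independent of $t$ by blindness, a standard counting/union-bound argument (hit-rate $|A_j|/m$ on each trap, probability $1/s$ that any given slot is the real one) yields the bound $\epsilon\le 2m/s$, with the factor $2$ absorbing the case analysis between Pauli $X$, $Y$ and $Z$ attacks relative to the trap's measurement basis.

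The resource count $O(sm)$ qubits and $O(sm)$ time steps is immediate from running $s$ patterns of size $m$ sequentially. The step I expect to be the main obstacle is the verifiability calculation: the subtle point is to justify that Bob's attack-support distribution on each slot is genuinely independent of which slot is real, which relies on the stand-alone blindness of Definition \ref{def:blindness} applied to the full $s$-run composition, and on arguing that a Pauli error at a non-trap location of a trap slot cannot propagate to corrupt future slots in a way that systematically evades detection — here the serial reset between runs and the isolation of each trap qubit are essential, and will need to be spelled out carefully before the $2m/s$ bound can be stated cleanly.
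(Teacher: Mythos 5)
Your protocol and overall strategy coincide with the paper's: $s$ isolated brickwork subgraphs run in series, one hosting the real computation at a random slot $t_g$, the others hosting an isolated trap at a random position; blindness plus a Pauli decomposition of Bob's deviation reduces the analysis to diagonal Pauli attacks, and a counting argument gives the bound. The DQC1-MBQC, complexity and correctness parts of your sketch are fine (modulo the remark that the trap subgraphs need dummy qubits in the computational basis to isolate the trap, which forces a separate purity argument from Theorem~\ref{t-compact} since not all preparations are then in the $(X\!-\!Y)$ plane).

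The genuine gap is in step (iv), which you yourself flag as the obstacle but whose sketch, as written, would not yield $2m/s$. First, your criterion for when an attack is ``harmful'' and ``detectable'' --- the raw non-trivial Pauli support $A_j$, with detection ``probability at least $1/2$, actually $1$ for a well-chosen basis'' --- is not correct once the deviation is commuted past the honest computation. In that frame a $Z$ on a measured (non-output) trap position is \emph{never} detected, while an $X$ or $Y$ on an output trap is detected only with probability $1/2$ averaged over $\theta$; symmetrically, a $Z$ on a measured position of the computational slot does not corrupt the outcome. The paper's bookkeeping via the sets $B_{i,k}$, $C_{i,k}$, $D^O_{i,k}$ and the condition $|B_{i,t_g}|+|C_{i,t_g}|+|D^O_{i,t_g}|\ge 1$ exists precisely to make these two effects cancel; conflating them either breaks the bound or loses the factor. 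Second, a per-slot union bound (``hit-rate $|A_j|/m$, probability $1/s$ of being the real slot'') does not handle the adversary who attacks \emph{many} slots at once: such an attack hits the real slot with probability close to $1$ while evading each trap with probability $(2m-1)/2m$. The paper has to optimise over the number $k$ of attacked slots, bounding $\frac{1}{s}\sum_k c_k\, k\left(\frac{2m-1}{2m}\right)^{k-1}$ with $\sum_k c_k\le 1$, and the maximum of $k\left(\frac{2m-1}{2m}\right)^{k-1}$ over $k$ (attained near $k\approx 2m$) is what produces the $2m$ in the numerator. Without this optimisation the claimed $\epsilon=2m/s$ does not follow from your sketch.
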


\section{DQC1-MBQC and Blindness} \label{section_2}

In this section we give a constructive proof of our main theorem for DQC1-MBQC and show how to construct a blind protocol as a consequence. The first step for proving Theorem \ref{t-compact} is the following rewriting scheme for patterns with flow.

\begin{lemma} \label{l-compact} Any measurement pattern on an open graph state $(G,I,O)$ with flow $(f,\preceq)$ (as defined in Definition \ref{d-flow}) and measurement angles $\boldsymbol{a}$ where either $|I| = |O|$ or the flow function is surjective can be rewritten as

\begin{equation} \label{eq-compact}
P_{\boldsymbol{a}} = \prod_{i \in O} X_{i}^{S_{i}^x} Z_{i}^{S_{i}^z}   \prod_{i \in O^c}^{\preceq} \left( \prescript{S_{i}^z}{}{ \left[ M_{i}^{a_{i}} \right]}^{S_{i}^x} \left( \prod_{\substack{ \{k : k \sim i, k \succeq  i\}}} E_{i,k} \right) N_{f(i)}(\ket{+}) \right)
\end{equation}
where $S_{i}^x=s_{f^{-1}(i)}$ for $i \in I^c$, else $S_{i}^x=0$ and $S_{i}^z= \sum_{\{k:k \in I^c, k \sim i,i \neq f^{-1}(k) \}} s_{f^{-1}(k)} \mod{2}$. The above pattern is runnable and implements the following unitary
\begin{equation} \label{eq-unit}
U_{G,I,O,\boldsymbol{a}} = 2^{|O^c|/2} \left( \prod_{i \in O^c} \bra{+_{a_i}}_i \right) E_G N_{I^c}
\end{equation}
where $E_G$ and $N_{I^c}$ represent the global entangling operator and global preparation respectively.
\end{lemma}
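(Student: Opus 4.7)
The plan is to start from the canonical \emph{flow form} of the pattern (established for flow in \cite{Flow06}), in which all preparations and entanglements sit at the left:
\begin{equation*}
P_{\boldsymbol{a}}^{\mathrm{flow}} = \prod_{i \in O} X_i^{S_i^x} Z_i^{S_i^z} \prod_{i \in O^c}^{\preceq} \prescript{S_i^z}{}{[M_i^{a_i}]}^{S_i^x} \, E_G \, N_{I^c}(\ket{+}),
\end{equation*}
with $S^x, S^z$ exactly as in the statement. This form is already known to implement the unitary $U_{G,I,O,\boldsymbol{a}}$ of (\ref{eq-unit}), so the work reduces to rewriting the prefix $E_G\, N_{I^c}(\ket{+})$ into the interleaved form (\ref{eq-compact}) and verifying runnability of the result.

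Next I would apply the standard measurement-calculus commutation rules: a preparation $N_j(\ket{+})$ and an edge operator $E_{p,q}$ commute through any command supported on disjoint qubits, and $E_{i,k}$ is only blocked by a measurement or $X/Z$ correction that touches $i$ or $k$. Under either hypothesis ($|I|=|O|$, or surjectivity of $f$), together with the fact that any flow is automatically injective, the map $f:O^c\to I^c$ is a bijection, so the global preparation factorises as $N_{I^c}(\ket{+}) = \prod_{i \in O^c} N_{f(i)}(\ket{+})$. Each $N_{f(i)}$ may then be pushed rightward through the commuting commands until it sits immediately before $M_i^{a_i}$. Decomposing $E_G$ into its edge factors, every $E_{i,k}$ with $i \preceq k$ is similarly pushed right until it is blocked at $M_i^{a_i}$; collecting the edges by their earlier endpoint yields exactly the factor $\prod_{\{k: k \sim i,\, k \succeq i\}} E_{i,k}$ appearing in (\ref{eq-compact}). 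Correctness of the implemented unitary is preserved because every commutation is an exact identity on the underlying CPTP maps.

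Runnability is then checked against Definition~\ref{def:run}. Condition (R2) is immediate from the bijectivity of $f$: each non-input qubit is prepared exactly once (as some $f(i)$) and each non-output is measured exactly once. Condition (R0) holds because $S_i^x$ and $S_i^z$ only reference signals $s_{f^{-1}(i)}$ and $s_{f^{-1}(k)}$ for $k \sim i$ with $k \neq f(i)$; flow axioms (F1) and (F2) place these preimages strictly before $i$ in $\preceq$. The delicate step, which I expect to be the main obstacle, is (R1) for each entangling $E_{i,k}$ with $k \succeq i$: one must show that $k$ is already prepared by the time $i$ is processed. If $k \in I$ this is trivial; otherwise $k = f(j)$ for some $j \in O^c$, and invoking (F2) with $f(j)=k$ and the distinct neighbor $i \neq j$ of $k$ forces $j \preceq i$, so $k$ was prepared when $j$ was handled. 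Neither endpoint has yet been measured because the measurement order respects $\preceq$ and $i$ is the earlier endpoint. This closes the verification and completes the proof.
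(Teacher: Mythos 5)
Your proposal is correct and is essentially the paper's own argument run in reverse: the paper starts from the interleaved form (\ref{eq-compact}) and commutes the preparations and entanglings outward to recover the standard flow form of \cite{Flow06} (whose unitary is then cited), whereas you start from that standard form and push the same commuting operators inward, and both proofs verify runnability from (F1), (F2) and the bijectivity of $f$ under either hypothesis in the same way. The only spots to tighten are the edge case $k=f(i)$ in your (R1) check, where the ``distinct neighbour'' instance of (F2) is vacuous but $k$ is prepared by $N_{f(i)}$ within the same step, and the symmetric observation that qubit $i$ itself is already prepared (via (F1) applied to $f^{-1}(i)$), both of which the paper spells out explicitly.
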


\begin{proof}
First we need to prove that $P_{\boldsymbol{a}}$ is runnable (cf. Definition~\ref{def:run}). For condition (R0) we make the following observations: At step $i$, for $i \in I^c$, we need signal $s_{f^{-1}(i)}$ which is generated at step $f^{-1}(i)$, where $f^{-1}(i) \prec i$ from flow condition (F1). We also need signals $s_{f^{-1}(k)}$, for $\{k:k \in I^c, k \sim i,i \neq f^{-1}(k)\}$, which are generated at step $f^{-1}(k)$, where $f^{-1}(k) \prec i$ from flow condition (F2). Thus, condition (R0) is satisfied (see Figure \ref{fig:figure1} for a particular example). For condition (R1) we make the following observations: At step $i$, for $i \in O^c$, the entangling operator and measurement operator act on qubit $i$ which either belongs in the set of inputs $I$ or is created at step $f^{-1}(i)$, where $f^{-1}(i) \prec i$ from flow condition (F1). Entangling operator acts also on qubits $\{k : k \sim i, k \succeq  i\}$. If $k=f(i)$ then qubit $k$ is created at the same step ($i$) by operator $N_{f(i)}$. If $k \neq f(i)$ then qubit $k$ is either an input or it is created at step $f^{-1}(k)$, and we have by flow condition (F2): $i$ is a neighbour of $k$ and $i \neq f^{-1}(k)$, thus $f^{-1}(k) \prec i$ (Figure \ref{fig:figure1}). Final correction operators act on qubits that belong to the set of outputs $O$, which either belong also to the set of inputs $I$ or are created at steps $\{ f^{-1}(i):  i \in O \}$, where $\forall i \in O \setminus I, f^{-1}(i) \prec i$ from flow condition (F1). Moreover they have not yet been measured since $i \notin O^C$. Thus, condition (R1) is satisfied. It is easy to see that condition (R2) is satisfied.

\begin{figure}[h]
\centering
\includegraphics[scale=0.7]{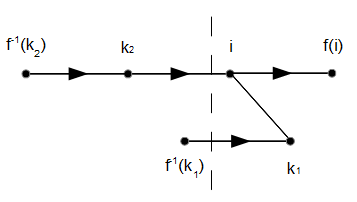}
\caption{Qubit $i$ gets an X correction from $k_2$ and Z corrections from $f^{-1}(k_2)$ and $f^{-1}(k_1)$. Qubits on the left of the dashed line are in the past of $i$. Qubit $k_1$ is created at timestep $f^{-1}(k_1)$ which is before timestep $i$ from flow condition (F2).}
\label{fig:figure1}
\end{figure}

Next we prove that the pattern of Equation \ref{eq-compact} is implementing the unitary operation of Equation \ref{eq-unit} when applied on an open graph with the properties described above. Since condition (R1) is satisfied, all preparation operators trivially commute with all previous operators
\begin{equation*}
P_{\boldsymbol{a}} = \prod_{i \in O} X_{i}^{S_{i}^x} Z_{i}^{S_{i}^z}   \prod_{i \in O^c}^{\preceq} \left( \prescript{S_{i}^z}{}{ \left[ M_{i}^{a_{i}} \right]}^{S_{i}^x} \left( \prod_{\substack{ \{k : k \sim i, k \succeq  i\}}} E_{i,k} \right) \right) N_{I^c}
\end{equation*}
Each entangling operator commutes with all previous measurements since it is applied on qubits with indices $\succeq i$.
\begin{equation*}
P_{\boldsymbol{a}} = \prod_{i \in O} X_{i}^{S_{i}^x} Z_{i}^{S_{i}^z}   \prod_{i \in O^c}^{\preceq} \left( \prescript{S_{i}^z}{}{ \left[ M_{i}^{a_{i}} \right]}^{S_{i}^x}  \right) E_G  N_{I^c}
\end{equation*}
We can decompose the conditional measurements into simple measurements and corrections
\begin{equation*}
P_{\boldsymbol{a}}  =  \prod_{i \in O} X_{i}^{S_{i}^x} Z_{i}^{S_{i}^z}   \prod_{i \in O^c}^{\preceq} \left(  M_{i}^{a_{i}} X_i^{S_{i}^x} Z_i^{S_{i}^z} \right) E_G  N_{I^c}
\end{equation*}
By rearranging the order of correction operators we take
\begin{equation*}
P_{\boldsymbol{a}} = \prod_{i \in O^c}^{\preceq} \left( X_{f(i)}^{s_i} \prod_{\{k:k \sim f(i), k \neq i\}} Z_k^{s_i} M_i^{a_i} \right) E_G N_{I^c}
\end{equation*}
The above equation implements the unitary operation presented in the lemma (Equation \ref{eq-unit}) as proved in \cite{Flow06}.
\end{proof}

Next, we notice that there exist many universal families of open graph states satisfying the requirements of the above lemma. One such example is the brickwork graph state originally defined in \cite{BFK09}. In this graph state (Figure \ref{fig:state1}), the subset of vertices of the first column correspond to the input qubits $I$ and the subset of vertices of the final column correspond to the output qubits $O$. This graph state has flow function $f((i,j))=(i,j+1)$ and the following partial order for measuring the qubits: $\{(1,1),(2,1), \ldots, (w,1)\} \prec \{(1,2),(2,2), \ldots, (w,2)\} \prec \ldots \prec \{(1,d-1),(2,d-1), \ldots, (w,d-1)\}$, where $w$ is the width and $d$ is the depth of the graph and hence from Lemma \ref{l-compact} we obtain the following corollary.

\begin{figure}[h]
\centering
\includegraphics[scale=0.7]{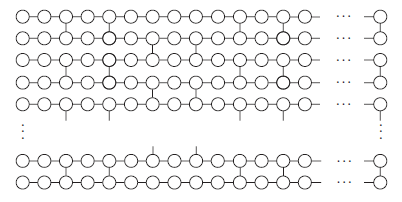}
\caption{Brickwork state}
\label{fig:state1}
\end{figure}

\begin{corollary} \label{c-compact} Any computation over the brickwork open graph state $G$ with qubit index $(i \leq w,j \leq d)$ can be rewritten as follows.
\begin{equation}
P_{\boldsymbol{a}}  =  \prod_{i=1}^{w} X_{(i,d)}^{S_{(i,d)}^x} Z_{(i,d)}^{S_{(i,d)}^z}  \prod_{j=1}^{d-1} \prod_{i=1}^{w}  \prescript{S_{(i,j)}^z}{}{ \left[ M_{(i,j)}^{a_{(i,j)}} \right]}^{S_{(i,j)}^x} \left( \prod_{\substack{ \{k,l : (k,l)\sim (i,j),\\ k \geq i, l \geq j\}}} E_{(i,j),(k,l)} \right) N_{(i,j+1)}  \label{eq8}
\end{equation}

where $S_{(i,j)}^x=s_{(i,j-1)}$ for $j>1$, else $S_{(i,1)}^x=0$

and $S_{(i,j)}^z= \sum_{\{k,l:(k,l)\sim(i,j),l\leq j\}} s_{(k,l-1)} \mod{2}$ for $j>2$, else $S_{(i,j)}^z=0$.

\end{corollary}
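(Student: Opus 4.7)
The plan is to view this corollary as a direct specialization of Lemma \ref{l-compact} to the brickwork open graph state, so that essentially no new argument is needed beyond a bookkeeping exercise.

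First, I would verify that the brickwork state satisfies the hypotheses of Lemma \ref{l-compact}. Take $f((i,j)) = (i,j+1)$ as the flow, defined on $O^c = \{(i,j) : 1 \le i \le w,\ 1 \le j \le d-1\}$, and take $\preceq$ to be the column-wise order in which $(i,j) \prec (k,l)$ whenever $j < l$, with qubits in the same column left incomparable. Condition (F0) holds because consecutive columns are connected by horizontal brickwork edges; (F1) holds because $j \prec j+1$; (F2) holds because every neighbor of $(i,j+1)$ other than $(i,j)$ has column index $\ge j+1$, hence lies $\succeq (i,j)$. Moreover $|I| = |O| = w$, so the disjunctive hypothesis of Lemma \ref{l-compact} is satisfied.

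Next I would substitute this flow into Equation~\ref{eq-compact}. The product over $O^c$ becomes a nested product over $j = 1, \dots, d-1$ and over $i = 1, \dots, w$ within each column (permissible in any order since the column is an antichain under $\preceq$). The preparation $N_{f(i)}(\ket{+})$ specializes to $N_{(i,j+1)}$. For the entangling step at qubit $(i,j)$, the neighbors in the ``future'' $\{(k,l)\sim(i,j) : (k,l) \succeq (i,j)\}$ consist of the right-neighbor $(i,j+1)$ together with any vertical brickwork neighbor $(i',j)$ that has not yet been processed within the column; this motivates the asymmetric convention $k \ge i,\ l \ge j$ written in the corollary, which avoids double-counting same-column edges.

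Third, I would compute the correction exponents by directly specializing the formulas in Lemma \ref{l-compact}. For $S^x$, we get $S^x_{(i,j)} = s_{f^{-1}((i,j))} = s_{(i,j-1)}$ when $j > 1$ and $0$ otherwise, matching the corollary. For $S^z_{(i,j)}$, the sum $\sum s_{f^{-1}((k,l))}$ ranges over neighbors $(k,l) \in I^c$ of $(i,j)$ with $(i,j) \neq f^{-1}((k,l))$. Substituting $f^{-1}((k,l)) = (k,l-1)$, the forbidden case is precisely $(k,l) = (i,j+1)$, so the remaining neighbors are those with $l \le j$ and $l \ge 2$, contributing $s_{(k,l-1)}$ each; this reproduces the displayed expression. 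For $j \le 2$, the only surviving index would be $l = 2$ with a vertical neighbor, and the brickwork has no vertical edges at column $2$ (nor at $1$), so $S^z_{(i,j)} = 0$, consistent with the corollary.

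The main obstacle I would expect is not conceptual but combinatorial: one must carefully line up the antichain/column product ordering in Lemma~\ref{l-compact} with the convention $k \ge i,\ l \ge j$ written in the corollary, and verify the brickwork edge pattern thoroughly enough to justify the vanishing of $S^z_{(i,j)}$ for $j \le 2$. Once these bookkeeping points are pinned down, the corollary follows immediately from Lemma \ref{l-compact}.
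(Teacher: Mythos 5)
Your proposal is correct and follows exactly the route the paper intends: the paper gives no separate proof of this corollary, merely noting that the brickwork state has flow $f((i,j))=(i,j+1)$ with the column-wise partial order and invoking Lemma \ref{l-compact}. Your verification of (F0)--(F2), the specialization of $S^x$ and $S^z$, and the observation that the $k\ge i,\ l\ge j$ convention resolves the same-column edge ordering supply precisely the bookkeeping the paper leaves implicit.
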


We show that patterns defined in Lemma \ref{l-compact} are within the framework of Definition \ref{def:OCQ} hence obtaining a sufficient condition for DQC1-MBQC.

\setcounter{theorem}{0}
\begin{theorem} \label{t-compact}
Any measurement pattern that can be rewritten in the form of Equation \ref{eq-compact} represents a DQC1-MBQC pattern.
\end{theorem}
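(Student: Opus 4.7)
The approach is to trace the purity parameter $\pi$ through every elementary operation of the rewriting in Equation \ref{eq-compact} and show it never exceeds $\pi(\rho_{in})$ by more than a fixed constant, matching Definition \ref{def:OCQ}. Runnability is already granted by Lemma \ref{l-compact}, so only the purity bound needs to be verified.

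First I would establish the following invariant: after a complete iteration of the outer loop for step $i$ (the preparation $N_{f(i)}(\ket{+})$, the entangling operators with the relevant neighbours of $i$, the adaptive measurement of $i$, and its corrections), the reduced state on the currently active qubits -- those that have been prepared but not yet measured -- equals $U_i \rho_{in} U_i^\dagger$ for some unitary $U_i$ depending on the realised outcomes. This is a prefix version of the flow-determinism property already used to prove Lemma \ref{l-compact}: every prefix of the rewritten pattern is itself a flowed open-graph pattern (the flow restricts cleanly under the $|I|=|O|$ or surjectivity hypothesis), so the unitary decomposition of Equation \ref{eq-unit} applies stage by stage. An immediate consequence is that the number of active qubits returns to $|I|$ after each complete outer step.

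Given the invariant, the purity analysis reduces to a direct check. Entangling gates and Pauli corrections are unitary, hence preserve both $\mathrm{Tr}(\rho^2)$ and the Hilbert space dimension $d$, leaving $\pi$ unchanged; combined with the invariant this yields $\pi = \pi(\rho_{in})$ at the end of each step. The preparation $N_{f(i)}(\ket{+})$ tensors with a pure qubit, adding $1$ to $d$ while leaving $\mathrm{Tr}(\rho^2)$ fixed, so $\pi$ grows by exactly one -- the peak value reached within a step. The adaptive measurement with trace-out on qubit $i$ then returns $\pi$ to $\pi(\rho_{in})$ by the invariant. Altogether $\pi(\rho_j) \le \pi(\rho_{in}) + 1$ after every elementary command $\rho_j$ and on every outcome branch, which confirms Definition \ref{def:OCQ} with constant $c = 2$.

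The main obstacle is the inductive step for the invariant, because during a step the active set temporarily contains the freshly prepared ancilla $f(i)$ in addition to the previously teleported data. The cleanest route is to write the state just after preparation as $U_{i-1}\rho_{in}U_{i-1}^\dagger$ tensored with the pure ancilla $\ket{+}\bra{+}$ on $f(i)$, and then argue that the block consisting of entangling with the neighbours $\{k : k \sim i, k \succeq i\}$, measuring qubit $i$ with the adaptive angle, tracing it out and applying the flow corrections is exactly one deterministic MBQC gate-teleportation step that swaps the logical slot from $i$ to $f(i)$. This is precisely where the flow conditions (F0)--(F2) and the $|I|=|O|$ or surjectivity hypothesis enter: they ensure that $f(i)$ can take over the role previously held by $i$ and that the corrections realise the intended unitary independently of the outcome, closing the induction and thereby the purity bound.
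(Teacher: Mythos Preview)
Your proposal is correct and follows essentially the same approach as the paper: both argue that each complete outer step of the rewriting in Equation~\ref{eq-compact} amounts to a single-qubit $J$-gate teleportation that moves the logical slot from $i$ to $f(i)$, hence is unitary on the active register, and then bound the within-step excursion of $\pi$ by noting that only one pure ancilla is introduced before the unitary entangling and the measurement/trace-out. Your write-up is in fact more explicit than the paper's---you state the step-wise invariant $U_i\rho_{in}U_i^\dagger$ cleanly and extract the sharp constant $c=2$, whereas the paper simply asserts that each step is ``mathematically equivalent'' to a unitary plus relabelling and that the constant-many elementary commands per step each touch a constant number of qubits; the ``prefix-flow'' formulation you give in the second paragraph is a slightly more abstract packaging of the same teleportation fact the paper invokes directly.
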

\setcounter{theorem}{3}

\begin{proof} A first general observation about the purity parameter $\pi$ is that adding a new pure qubit $\sigma$ to state $\rho$ means that $\pi$ increases by unity
\begin{equation*}
\pi_{\rho\otimes\sigma} = \log_2{\text{Tr}((\rho \otimes \sigma) ^2)} + d + 1  = \log_2{\text{Tr}(\rho ^2)\text{Tr}(\sigma^2)} + d + 1  = \pi_{\rho} + 1.
\end{equation*}
Additionally, applying any unitary $U$ does not change the purity parameter $\pi$ of the system since $\text{Tr}((U \rho U^{\dagger})^2) = \text{Tr}(\rho^2)$ and dimension remains the same.

Returning to Equation \ref{eq-compact}, we notice that for every step $i\in O^c$ of the product the total computation performed  corresponds mathematically to the following: On the qubit tagged with position $i$, a $J(a'_{i})$ unitary gate is applied (where $a'_{i}$ is an angle that depends on $a_{i}$ and previous measurement results) up to a specific Pauli correction (depending on the known measurement result) and some specific Pauli corrections on the its entangled neighbours (again depending on the measurement result). At the end the qubit is tagged with position $f(i)$ (where $f$ is the flow function). Since this mathematically equivalent computation is a unitary and the dimension of the system remains the same (there is only a change of position tags) we conclude that each step $i \in O^c$ does not increase the purity parameter of the system. To finish the proof we need to ensure that the individual operations within each step $i \in O^c$ and for $i \in O$ do not increase the purity parameter by more than a constant (and since there is only a constant number of operations within each step this does not increase the purity at any point more than constant). This is true since all these operations apply on (or add or trace over) a constant number of qubits.
\end{proof}

Building on this result, we can translate the UBQC protocol of \cite{BFK09} (and in fact many other existing protocols) to allow the blind execution of any DQC1-MBQC computation, where the server is restricted to DQC1-MBQC complexity class. The UBQC protocol is based on the brickwork graph state described above. Alice prepares all the qubits of the graph state, adding a random rotation around the $(X,Y)$ plane to each one of them: $\ket{+_{\theta_i}}$, where $\theta_i$ is chosen at random from the set $A=\{0,\pi/4,\pi/2, 3\pi/4, \pi, 5\pi/5, 3\pi/2, 7\pi/4\}$ and sends them to Bob, who entangles them according to the graph. The protocol then follows the partial order given by the flow: Alice calculates the corrected measurement angle $\alpha'_i$ for each qubit using previous measurement results according to the flow dependences. She sends to Bob measurement angle $\delta_i=\alpha'_i+\theta_i+r_i \pi$, using an extra random bit $r_i$. Bob measures according to $\delta_i$, reports the result back to Alice who corrects it by XOR-ing with $r_i$. In the case of quantum output, the final layer is sent to Alice and is also corrected according to the flow dependences by applying the corresponding Pauli operators.

\begin{algorithm}
\caption{Blind BQ1P protocol}
\label{prot:blind}
\begin{flushleft}
\textbf{Alice's input:}
\vspace{-7pt}
\end{flushleft}
\begin{itemize}
 \item A vector of angles $\boldsymbol{a}=(a_{1,1}, \ldots, a_{w,d})$, where $a_{i,j}$ comes from the set $A=\{0, \pi/4, 2\pi/4, \ldots, 7\pi/4 \}$, that when plugged in the measurement pattern $P_{\boldsymbol{a}}$ of Equation \ref{eq8} applied on the brickwork state, implements the desired computation. This computation is applied on a fixed input state $\ket{+}\bra{+} \otimes I_{w-1}/2^{w-1}$.
\end{itemize}
\begin{flushleft}
\textbf{Alice's output:}
\vspace{-7pt}
\end{flushleft}
\begin{itemize}
 \setlength{\itemsep}{-1pt}
 \item The top output qubit (qubit in position $(1,d)$).
\end{itemize}
\begin{flushleft}
\textbf{The protocol}
\vspace{-7pt}
\end{flushleft}
\begin{enumerate}
\item Alice picks a random angle $\theta_{1,1} \in A$, prepares one pure qubit in state $R_z(\theta_{1,1}) \ket{+}$ and sends it to Bob who tags it as qubit $(1,1)$.
\item Bob prepares the rest of input state (qubits $(2,1), \ldots, (w,1)$) in the maximally mixed state $I_{w-1}/2^{w-1}$.
\item Alice and Bob execute the rest of the computation in rounds. For $j=1$ to $d-1$ and for $i=1$ to $w$
\begin{enumerate}
 \setlength{\itemsep}{-1pt}
\item  \label{step:client}  \textbf{Alice's preparation}

\begin{enumerate}

\item Alice picks a random angle $\theta_{i,j+1} \in A$.
% with special case $\theta_{i,d}=0$. \elham{WHY?}

\item Alice prepares one pure qubit in state $R_z(\theta_{i,j+1}) \ket{+}$.

\item Alice sends it to Bob. Bob tags it as qubit $(i,j+1)$.

\end{enumerate}

\item  \label{step:server} \textbf{Entanglement and measurement}

\begin{enumerate}
\setlength{\itemsep}{-1pt}
\item Bob performs the entangling operator(s): $$\prod_{\substack{ \{k,l : (k,l)\sim (i,j) ,k \geq i, l \geq j\}}} E_{(i,j),(k,l)}$$

\item Bob performs the rest of the computation using classical help from Alice:

\begin{enumerate}
\setlength{\itemsep}{-1pt}
\item Alice computes the corrected measurement angle $a ' _{i,j} = (-1)^{S_{i,j}^x}a_{i,j}+S_{i,j}^z \pi$.

\item Alice chooses a random bit $r_{i,j}$ and computes $\delta _{i,j}=a ' _{i,j} + \theta _{i,j} + r_{i,j} \pi$.

\item Alice transmits $\delta _{i,j}$ to Bob.

\item Bob performs operation $M_{i,j}^{\delta _{i,j}}$ which measures and traces over the qubit $(i,j)$ and retrieves result $b_{i,j}$.

\item Bob transmits $b_{i,j}$ to Alice.

\item Alice updates the result to $s_{i,j}= b_{i,j}+ r_{i,j} \mod{2}$.

\end{enumerate}

\end{enumerate}

\end{enumerate}
\item Bob sends to Alice the final layer of qubits, Alice performs the required corrections and outputs the result.
%performs a X measurement on the qubit $(1,d)$ and returns result to Alice. Alice corrects the measurement result by adding $S_{1,d}^z$ modulo 2.

\end{enumerate}

\end{algorithm}

Since the brickwork graph state satisfies the requirements of Theorem \ref{t-compact} we can adapt the Universal Blind Quantum Computing protocol by making Alice and Bob follow the order of Equation \ref{eq8} and operate on input $\ket{+}\bra{+} \otimes I_{w-1}/2^{w-1}$. A detailed description is given in Protocol \ref{prot:blind}.

\begin{theorem}

Protocol \ref{prot:blind} is correct.

\end{theorem}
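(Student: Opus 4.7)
The plan is to show that, when Bob follows the protocol honestly, the joint evolution realised by Alice and Bob coincides with the runnable pattern of Equation \ref{eq8} applied to the input $\ket{+}\bra{+}\otimes I_{w-1}/2^{w-1}$, which by Corollary \ref{c-compact} and Lemma \ref{l-compact} implements the desired unitary $U_{G,I,O,\boldsymbol{a}}$. The only difference between Protocol \ref{prot:blind} and this bare pattern is Alice's use of the random angles $\theta_{i,j}$ and bits $r_{i,j}$, so the task reduces to showing that both of these cancel out of the measurement statistics after Alice's classical post-processing.

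First I would observe that each qubit Alice sends in state $R_z(\theta_{i,j})\ket{+}$ differs from the bare $\ket{+}$ only by a diagonal single-qubit rotation. Since $R_z(\theta)$ commutes with every controlled-$Z$ entangling operator $E_{(i,j),(k,l)}$, the rotations can be pushed through Bob's entangling layer and absorbed into the measurement basis of qubit $(i,j)$: measuring $R_z(\theta_{i,j})\rho R_z(\theta_{i,j})^{\dagger}$ at angle $\delta_{i,j}$ is equivalent to measuring $\rho$ at angle $\delta_{i,j}-\theta_{i,j}$. Substituting the choice of $\delta_{i,j}$ in Step 3(b)(ii) gives $\delta_{i,j}-\theta_{i,j}=a'_{i,j}+r_{i,j}\pi$. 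The $r_{i,j}\pi$ term merely swaps the projectors $\ket{+_{a'_{i,j}}}\bra{+_{a'_{i,j}}}$ and $\ket{-_{a'_{i,j}}}\bra{-_{a'_{i,j}}}$, so Bob's reported outcome $b_{i,j}$ is related to the genuine signal by $s_{i,j}=b_{i,j}\oplus r_{i,j}$, which is exactly Alice's update rule in Step 3(b)(vi). Finally, the corrected angle $a'_{i,j}=(-1)^{S^x_{i,j}}a_{i,j}+S^z_{i,j}\pi$ is precisely the adaptive angle of $\prescript{S^z_{i,j}}{}{[M^{a_{i,j}}_{(i,j)}]}^{S^x_{i,j}}$ appearing in Equation \ref{eq8}, with $S^x_{i,j}$ and $S^z_{i,j}$ computed from earlier signals through the flow dependences of the brickwork geometry as given in Corollary \ref{c-compact}.

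Combining these three observations shows that, up to the final Pauli byproducts $X_{(i,d)}^{S^x_{(i,d)}}Z_{(i,d)}^{S^z_{(i,d)}}$ which Alice herself applies upon receiving the output layer, Protocol \ref{prot:blind} enacts exactly the pattern of Equation \ref{eq8} on the prescribed input state, and hence produces the correct output of the delegated computation. The main bookkeeping obstacle, I expect, will be verifying that the interleaved loop in Step 3 — in which Alice prepares qubit $(i,j+1)$ while Bob entangles and measures qubit $(i,j)$ — respects the runnability conditions (R0)–(R2) of Definition \ref{def:run} with respect to the partial order of Corollary \ref{c-compact}. This I would handle by induction on the lexicographic index $(j,i)$, using the brickwork flow $f((i,j))=(i,j+1)$ to confirm that every signal $s_{f^{-1}(k)}$ required by the computation of $a'_{i,j}$ has already been obtained at the time Alice needs it, exactly as in the proof of Lemma \ref{l-compact}.
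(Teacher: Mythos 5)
Your argument is correct and follows essentially the same route as the paper's proof: commute the $R_z(\theta_{i,j})$ rotations through the controlled-$Z$ entangling operators, absorb them into the measurement angle so that $\delta_{i,j}-\theta_{i,j}=a'_{i,j}+r_{i,j}\pi$, and note that the $r_{i,j}\pi$ term is undone by Alice's XOR of the reported outcome. The runnability bookkeeping you flag at the end is already discharged by Lemma \ref{l-compact} and Corollary \ref{c-compact}, so nothing further is needed there.
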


\begin{proof}
Correctness comes from the fact that what Alice and Bob jointly compute is mathematically equivalent to performing the pattern of Equation \ref{eq8} on input $\ket{+}\bra{+} \otimes I_{w-1}/2^{w-1}$. The argument is the same as in the original universal blind quantum computing protocol \cite{BFK09} repeated here for completeness. Firstly, since entangling operators commute with $R_z$ operators, preparing the pure qubits in a rotated state does not change the underlying graph state; only the phase of each qubit is locally changed, and it is as if Bob had performed the $R_z$ rotation after the entanglement. Secondly, since a measurement in the $\ket{+_{a}}, \ket{-_{a}}$ basis on a state $\ket{\phi}$ is the same as a measurement in the $\ket{+_{a+\theta}}, \ket{-_{a+\theta}}$ basis on $R_z(\theta)\ket{\phi}$, and since $\delta=a ' + \theta + \pi r$ , if $r = 0$, Bob's measurement has the same effect as Alice's target measurement; if $r = 1$, all Alice needs to do is flip the outcome.

\end{proof}

Note that Protocol  \ref{prot:blind} can be trivially simplified by omitting all the measurements that are applied on maximally mixed states (i.e. all measurements applied on qubits in rows 2 to $w$ from the beginning of the computation until each one is entangled with a non-maximally mixed qubit). However this does not give any substantial improvement in the complexity of the protocol.

\begin{theorem} \label{thm:blindness}

Protocol  \ref{prot:blind} is blind.

\end{theorem}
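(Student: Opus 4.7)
The plan is to adapt the blindness argument of \cite{BFK09,DFPR13} to the reordered execution of Protocol \ref{prot:blind}. I will model Bob's deviation as an arbitrary sequence of CPTP maps $\{\mathcal{B}_{i,j}\}$ interleaved with the messages exchanged with Alice, and show that, averaged over Alice's secret parameters $\{\theta_{i,j},r_{i,j}\}$, the reduced final state $\mathrm{Tr}_A(\sigma_{AB})$ equals a fixed CPTP map $\mathcal{E}$ applied to $\mathrm{Tr}_A(\rho_{AB})$, with $\mathcal{E}$ depending only on Bob's strategy and not on the input angles $\boldsymbol{a}$.

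First I would establish two elementary facts about Alice's messages in a single round. Since the corrected angle $a'_{i,j}=(-1)^{S^x_{i,j}}a_{i,j}+S^z_{i,j}\pi$ lies in $A$ and $A+\pi=A$, the classical angle $\delta_{i,j}=a'_{i,j}+\theta_{i,j}+r_{i,j}\pi$ is uniformly distributed on $A$ whenever $\theta_{i,j}$ is uniform on $A$, so its distribution is independent of $a'_{i,j}$. Conditioned on any fixed value of $\delta_{i,j}$, the qubit $R_z(\theta_{i,j})\ket{+}$ that Alice transmitted in the same round equals $R_z(\delta_{i,j}-a'_{i,j}-r_{i,j}\pi)\ket{+}$, and averaging over $r_{i,j}\in\{0,1\}$ alone collapses it to
\begin{equation*}
\tfrac{1}{2}\bigl(\ket{+_{\delta_{i,j}-a'_{i,j}}}\bra{+_{\delta_{i,j}-a'_{i,j}}}+\ket{-_{\delta_{i,j}-a'_{i,j}}}\bra{-_{\delta_{i,j}-a'_{i,j}}}\bigr)=\frac{I}{2}.
\end{equation*}
Hence the joint quantum-classical message Alice emits in any one round becomes, once $(\theta_{i,j},r_{i,j})$ are averaged out, the fixed state $(I/2)\otimes|A|^{-1}\sum_{\delta\in A}\ket{\delta}\bra{\delta}$, independent of $\boldsymbol{a}$ and of the earlier history.

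I would then propagate this conclusion across the protocol by induction on the measurement order $\preceq$. The induction hypothesis is that, after averaging over all $(\theta_{k,l},r_{k,l})$ used in earlier rounds, Bob's marginal state is a fixed CPTP image of $\mathrm{Tr}_A(\rho_{AB})$. For the inductive step, Alice's outgoing messages in the next round are distributed independently of $\boldsymbol{a}$ by the previous paragraph; Bob's arbitrary CPTP deviation acts on those messages together with his register; his reply bit $b_{i,j}$ and Alice's internal update $s_{i,j}=b_{i,j}\oplus r_{i,j}$ live entirely in Alice's register; and Alice's final Pauli correction on the output layer acts only after Bob has released it. The main obstacle will be handling Bob's adaptivity, since each $\mathcal{B}_{i,j}$ can in principle depend on all earlier messages, which pointwise depend on $\boldsymbol{a}$ through $a'$. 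The resolution is that only the joint distribution of the messages Bob ever sees matters for his reduced state, and that distribution is $\boldsymbol{a}$-independent by the two facts established above; consequently the composition of Bob's adaptive maps with the averaged communication channel yields a single CPTP map $\mathcal{E}$ independent of $\boldsymbol{a}$, as required by Definition \ref{def:blindness}.
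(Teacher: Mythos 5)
Your proposal is correct and follows essentially the same route as the paper's proof in Appendix~\ref{appendix_1}: your observation that, conditioned on $\delta_{i,j}$, the transmitted qubit is $R_z(\delta_{i,j}-a'_{i,j}-r_{i,j}\pi)\ket{+}$ and averages over $r_{i,j}$ to $I/2$ is precisely what the paper implements via the controlled unitary $U=\prod_i Z_i(-\delta_i)$, and your induction along $\preceq$ mirrors the paper's iterative left-to-right summation exploiting that each term with index $i$ depends only on parameters with index $\leq i$. The only slip is bookkeeping: in Protocol~\ref{prot:blind} the qubit encrypted by $\theta_{i,j}$ is sent in round $(i,j-1)$ while $\delta_{i,j}$ is sent in round $(i,j)$, so the pairing is across rounds rather than within one round; this does not affect the argument since, as you note at the end, only the jointly averaged state of Bob's full transcript matters.
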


\begin{proof} [(Proof Sketch)]

A detailed proof is provided in Appendix \ref{appendix_1}. Intuitively, rotation by angle $\theta_{i,j}$ serves the purpose of hiding the actual measurement angle, while rotation by $r_{i,j} \pi$ hides the result of measuring the quantum state. This proof is consistent with definition of blindness based on the relation of Bob's system to Alice's system which takes into account prior knowledge of the secret and is a good indicator that blindness can be composable \cite{DFPR13}.

\end{proof}

Regarding the complexity of the protocol, Alice needs to pick a polynomially large number of random bits and perform polynomially large number of modulo additions that is to say Alice classical computation is restricted to the class $BPP$. However Alice's quantum requirement is only to prepare single qubits, she has access to no quantum memory or quantum operation. Therefore assuming $BQ1P \not \subset BPP$ suggests  Alice's quantum power is more restricted than $BQ1P$ and hence DQC1-MBQC. On the other hand, Bob performs a pattern of the form given in Equation \ref{eq8}, with the difference that instead of preparing the pure qubits himself, he receives the pure qubits through the quantum channel that connects him with Alice. Also, the qubits are not prepared in state $\ket{+}$, but in some state on the $(X,Y)$ plane, but this doesn't alter the reasoning in the complexity proofs. Thus, Bob has computational power that is within the DQC1-MBQC complexity class according to the Corollary \ref{c-compact} and Theorem \ref{t-compact}.

\section{Verification} \label{section_3}

VBQC protocol is based on the ability to hide a trap qubit inside the graph state while not affecting the correct execution of the pattern. Both the trap qubit and the qubits which participate in the actual computation are prepared in the $(X,Y)$ plane of the Bloch sphere. To keep them disentangled, some qubits (called dummy) prepared in the computational basis $\{\ket{0},\ket{1}\}$, are injected between them. Being able to choose between the two states is essential for blindness (Theorem 4 in~\cite{fitzsimons2012unconditionally}). In particular, if a dummy qubit is in state $\ket{0}$, applying the entangling operator $cZ$ between this qubit and a qubit prepared on the $(X,Y)$ plane has no effect. If a dummy qubit is in state $\ket{1}$  then applying $cZ$ will introduce a Pauli $Z$ rotation on the qubit prepared on the $(X,Y)$ plane. This effect can be cancelled by Alice in advance, by introducing a Pauli $Z$ rotation on all the neighbours of $\ket{1}$'s when preparing the initial state.

In the simplest version of VUBQC, a single trap, prepared in state $\ket{+_{\theta_t}}$, where $\theta$ is chosen at random from the angles set $A$ (defined above) and placed at position $t$, chosen at random between all the vertices of the open graph state $(G,I,O)$. During the execution of the pattern, if $t \notin O$,  Bob is asked to measure qubit $t$ with angle $\theta_t+r \pi$ and return the classical result $b_t$ to Alice. If $b_t=r_t$ Alice sets an indicator bit to state $acc$ (which means that this computation is accepted), otherwise she sets it to $rej$ (computation is rejected). If $t \in O$, Alice herself measures the trap qubit and sets the indicator qubit accordingly. This version of the protocol is proven to be correct and $\epsilon$-verifiable, with $\epsilon=(m-1)/m$, where $m$ is the size of the computation.

A generalisation of this technique which allows for arbitrary selection of parameter $\epsilon$ is also presented in~\cite{fitzsimons2012unconditionally}. By allowing for a polynomial number of traps to be injected in the graph state and adapting the computation inside a fault tolerant scheme with parameter $d$ one can have $\epsilon$ inversely exponential to $d$. The question is whether this amplification method can also be used to design a verification protocol for DQC1-MBQC with arbitrary small $\epsilon$. Unfortunately the underlying graph state used by this protocol does not have flow and not all qubits are prepared in the $(X,Y)$ plane, so that one can not apply Theorem \ref{t-compact} to get a compatible rewriting of the pattern. Moreover, having the requirement that we should be able to place every trap qubit (which is a pure qubit) at any position in the graph, means that there exist patterns that will never be possible to be rewritten to satisfy the purity requirement. This leads us to seek a different approach for probability amplification for verification in the DQC1-MBQC model.

\begin{figure}[t]
\centering
\includegraphics[scale=0.5]{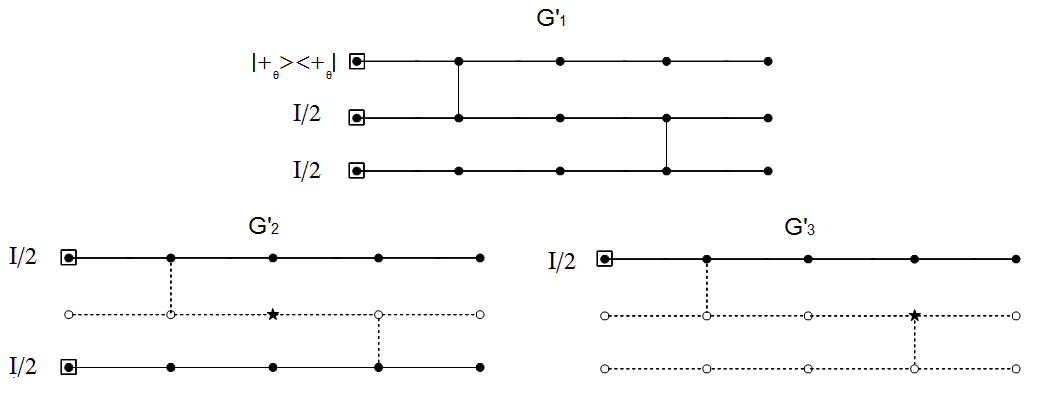}
\caption{Let $G'$ be the graph which consists of $s$ isolated brickwork graphs (each denoted as $G'_i$), each of the same dimensions required for the desired computation. An example construction with $s=3$ and one trap per graph together with a small brickwork state for computation is given above. Black vertices correspond to auxiliary qubits prepared on the $(X-Y)$ plane or mixed state when they are inputs (inside square), star vertices correspond to trap qubits and white vertices to auxiliary qubits prepared in the computational basis. Edges represent entangling operators, dashed where entangling has no effect (except of local rotations).}
\label{fig:figure2}
\end{figure}

Instead of placing a polynomial number of isolated traps within the same graph, which is also used to perform the actual computation, we utilise $s$ isolated brickwork subgraphs, one used for the computation and the rest being trap subgraphs (see Figure \ref{fig:figure2}). Therefore at the beginning of the protocol, Alice chooses random parameter $t_g$, which denotes which graph will be the computational subgraph, and for each of the remaining trap subgraphs $i$, she chooses a random position $t_i$ to hide one isolated trap. The rest of each trap subgraph will be a trivial computation (all measurement angles set to 0) on a totally mixed state, and a selected set of dummy qubits are placed to isolate this computation from the trap. Computation subgraph and trap subgraphs are of the same size, and by taking advantage of the blindness of the protocol, Bob cannot distinguish between them. Therefore, to be able to cheat, he needs to deviate from the correct operation only during the execution on the computational subgraph and never deviate while operating on any of the traps. This gives the desirable $\epsilon$ parameter that will be proved later. The full description of protocol is given in Protocol \ref{prot:verifiable_many}. Each isolated pattern $k$ is executed separately and according to the DCQ1-MBQC rewriting on the brickwork state given in Equation~\ref{c-compact} in the blind setting. Pre-rotations on the neighbours of dummy qubits guarantee that the computation is not affected by the choice of dummies as described before. To prove the complexity of the protocol we need to notice that although the graph used satisfies the conditions of Theorem \ref{t-compact}, the existence of the dummy qubits prepared in the computational basis creates the need of a new proof.

\begin{algorithm}
\caption{Verifiable DQC1-MBQC protocol with $s-1$ trap computations}
\label{prot:verifiable_many}
\begin{flushleft}
\textbf{Alice's input:}
\vspace{-7pt}
\end{flushleft}
\begin{itemize}
 \item An angle vector $\boldsymbol{a}=(a_{1,1}, \ldots, a_{w,d-1})$, where $a_{i,j}$ comes from the set $A=\{0, \pi/4, 2\pi/4, \ldots, 7\pi/4 \}$, that, when plugged in the measurement pattern $P_{\boldsymbol{a}}$ of Equation \ref{eq8} on the brickwork open graph state $G$ of dimension $(w,d)$ and flow ($f,\preceq$), it implements the desired computation on fixed input $\ket{+}\bra{+} \otimes I_{w-1}/2^{w-1}$.
\end{itemize}
\begin{flushleft}
\textbf{Alice's output:}
\vspace{-7pt}
\end{flushleft}
\begin{itemize}
 \setlength{\itemsep}{-1pt}
\item The top output qubit of $G$ (qubit in position $(1,d)$ in $G$) together with a 1-bit, named \emph{acc}, that indicates if the result is accepted or not.
\end{itemize}
\begin{flushleft}
\textbf{The protocol}
\vspace{-7pt}
\end{flushleft}
\begin{itemize}
\item \textbf{Preparation steps. } Alice picks $t_g$ at random from $\{1, \ldots, s\}$. Let $G'$ be the graph which consists of $s$ isolated brickwork graphs, each of the dimension the same as $G$. Then the $t_g$-th isolated graph (named $G'_{t_g}$) will be the computational subgraph for this run of the protocol.
\item Alice maps the measurement angles of the computational subgraph $G'_{t_g}$ to angles of graph $G$: $\boldsymbol{a}'_{G_{t_g}\setminus O_{t_g}}=\boldsymbol{a}$ and appropriately set the dependency sets $S^x$ and $S^z$ for all the vertices of $G'_{t_g}$ (according to the standard flow), while for the rest of the vertices (graph $G' \setminus G'_{t_g}$) the sets $S^x$ and $S^z$ are empty.
\item For $k=1$ to $s$ except $t_g$:
\begin{enumerate}
\item Alice chooses one random vertex $\boldsymbol{t}_k=(t_x,t_y)_k$ among all vertices of $G'_k$ for placing the trap.\label{step_2}
\item By $G'_k$'s geometry, vertex $(t_x,t_y)$ may be connected by a vertical edge to vertex $(t'_x,t_y)$, where $t'_x$ represents either $t_x+1$ or $t_x-1$. We add in $D$ (set of dummies) all vertices of rows $t_x$, $t'_x$ (if it exists) of $G'_k$, except the trap itself. \label{step_3}
\item All elements of $\boldsymbol{a}'_{G_k}$ are mapped to 0.\label{step_4}
\end{enumerate}
\item Alice chooses random variables $\boldsymbol{\theta}_{G' \setminus D}$, each uniformly at random from $A$.
\item Alice chooses random variables $\boldsymbol{r}_{G'}$ and $\boldsymbol{d}_{D}$, each uniformly at random from $\{0,1\}$.
\item For $k=1$ to $s$:
\begin{enumerate}
\item \textbf{Initial step.} If $k = t_g$ then: Let $(1,1)_k$ be the position of the top input qubit in $G'_k$. Alice prepares the following states and sends them to Bob:
\begin{eqnarray*}
\{(1,1)_k\} & ~~  \ket{+_{\theta _{(1,1)_k}}}\\
\forall (i,1)_k \notin \{(1,1)_k\} & ~~ I/2
\end{eqnarray*}
Otherwise: Alice prepares the following states and sends them to Bob:
\begin{eqnarray*}
\forall (i,1)_k \in D & ~~ \ket{d_{(i,1)_k}} \\
 (i,1)_k = \boldsymbol{t}_k & ~~ \prod_{\{m,l:(m,l)_k \sim (i,1)_k, (m,l)_k \in D\}} Z^{d_{(m,l)_k}} \ket{+_{\theta _{(i,1)_k}}}\\
\forall (i,1)_k \notin \{ D, \boldsymbol{t}_k \} & ~~ I/2
\end{eqnarray*}
\end{enumerate}
\end{itemize}

\end{algorithm}

\addtocounter{algorithm}{-1}

\begin{algorithm}
\caption{\textbf{(cont'd)}}
\label{prot:verifiable2}
%\begin{flushleft}

\begin{itemize}
\item[]
\begin{enumerate}
\setcounter{enumi}{1}
\item \textbf{Main Iteration.} For $j=1$ to $d-1$, for $i=1$ to $w$:
\begin{enumerate}
 \setlength{\itemsep}{-1pt}
\item  \label{step:client}  \textbf{Alice's preparation}

\begin{enumerate}

\item Alice prepares one pure qubit in one of the following states, depending on $(i,j+1)_k$:
\begin{eqnarray*}
(i,j+1)_k \in D & ~~ \ket{d_{(i,j+1)_k}} \\
(i,j+1)_k \notin D & ~~ \prod_{\{m,l : (m,l)_k \sim (i,j+1)_k, (m,l)_k \in D \}} Z^{d_{(m,l)_k}} \ket{+_{\theta _{(i,j+1)_k}}}
\end{eqnarray*}

\item Alice sends it to Bob. Bob labels it as qubit $(i,j+1)_k$.

\end{enumerate}

\item  \label{step:server} \textbf{Entanglement and measurement}

\begin{enumerate}
\setlength{\itemsep}{-1pt}
\item Bob performs the entangling operator(s): $$\prod_{\substack{ \{m,l : (m,l)_k \sim (i,j)_k ,m \geq i, l \geq j\}}} E_{(i,j)_k,(m,l)_k}$$

\item Bob performs the rest of the computation using classical help from Alice:

\begin{enumerate}
\setlength{\itemsep}{-1pt}
\item Alice computes the corrected measurement angle $a '' _{(i,j)_k} = (-1)^{S_{(i,j)_k}^x}a'_{(i,j)_k}+S_{(i,j)_k}^z \pi$.

\item Alice computes actual measurement angle $\delta _{(i,j)_k}=a '' _{(i,j)_k} + \theta _{(i,j)_k} + r_{(i,j)_k} \pi$.

\item Alice transmits $\delta _{(i,j)_k}$ to Bob.

\item Bob performs operation $M_{(i,j)_k}^{\delta _{(i,j)_k}}$ which measures and traces over the qubit $(i,j)_k$ and retrieves result $b_{(i,j)_k}$.

\item Bob transmits $b_{(i,j)_k}$ to Alice.

\item Alice updates the result to $s_{(i,j)_k}= b_{(i,j)_k}+ r_{(i,j)_k} \mod{2}$.

\end{enumerate}

\end{enumerate}

\end{enumerate}
\item Bob sends the final layer to Alice and Alice applies the final corrections if needed (only in round $t_g$).
\item If the trap qubit is within the qubits received, Alice measures it with angle $\delta_{\boldsymbol{t}_k} = \theta _{\boldsymbol{t}_k} + r_{\boldsymbol{t}_k} \pi$ to obtain $b_{\boldsymbol{t}_k}$. Also, Alice discards all qubits received by Bob in this round except qubit $({1,d})_{t_g}$.
\end{enumerate}
\item Alice outputs qubit in position $({1,d})_{t_g}$ and sets bit \emph{acc} to 1 if $b_{\boldsymbol{t}_k}=r_{\boldsymbol{t}_k}$ for all $k$.
\end{itemize}

\end{algorithm}

\begin{theorem}
The computational power of Bob in Protocol \ref{prot:verifiable_many} is within DQC1-MBQC.
\end{theorem}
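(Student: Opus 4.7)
The plan is to adapt the purity argument from Theorem \ref{t-compact} to handle the dummy qubits, which are the only structural novelty in Protocol \ref{prot:verifiable_many} compared to Protocol \ref{prot:blind}. The crucial observation will be that the dummies are prepared as pure computational-basis states only from Alice's point of view; from Bob's local quantum state (obtained by tracing out Alice's secret classical registers), each dummy appears maximally mixed, so the DQC1-MBQC analysis of Theorem \ref{t-compact} still applies.

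First I would treat the $s$ isolated subgraphs $G'_1,\ldots,G'_s$ independently, since they share no entangling edges and are executed sequentially. Each $G'_k$ is a brickwork state, which has surjective flow and accepts the rewriting of Lemma \ref{l-compact}. For the computation subgraph $G'_{t_g}$ the analysis is identical to that of Corollary \ref{c-compact} combined with Theorem \ref{t-compact}: every auxiliary qubit is prepared in the $(X,Y)$ plane, and the purity parameter changes by at most a constant per elementary command.

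Next I would handle each trap subgraph $G'_k$ with $k\neq t_g$. The qubits Alice sends fall into three classes. Qubits prepared directly in the maximally mixed state are trivially fine. For a dummy $\ket{d_{(i,j)_k}}$ with $d_{(i,j)_k}$ uniformly random in $\{0,1\}$, Bob's marginal state is $\tfrac{1}{2}\ket{0}\bra{0}+\tfrac{1}{2}\ket{1}\bra{1}=I/2$. For the trap or a pre-corrected auxiliary of the form $\prod Z^{d_{(m,l)_k}}\ket{+_{\theta_{(i,j)_k}}}$, averaging over the uniform choice of $\theta_{(i,j)_k}\in A$ yields $Z^{d_{(m,l)_k}}(I/2)Z^{d_{(m,l)_k}}=I/2$, which is the standard randomized-rotation calculation underlying the blindness proof of Theorem \ref{thm:blindness}. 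Because the angles $\delta_{(i,j)_k}$ that Alice announces during the protocol are marginally uniformly distributed thanks to the one-time pads $r_{(i,j)_k}$, no classical side information collapses this marginal back to a pure state.

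With every qubit entering Bob's system effectively in $I/2$, the remainder of the argument mirrors Theorem \ref{t-compact}: preparation operators add qubits of purity parameter $0$, the entangling operators $E_{(i,j)_k,(m,l)_k}$ are unitary and therefore preserve $\pi$, and each measurement-plus-trace-out on a single qubit changes $\pi$ by at most a constant. Since only a constant number of operations occur per time step, Bob's purity parameter stays within a constant of $\pi(\rho_{in})$ throughout the protocol, fulfilling Definition \ref{def:OCQ}. The main obstacle I anticipate is making the ``marginal'' viewpoint rigorous: one has to specify that Alice's random registers ($t_g$, $\{\boldsymbol{t}_k\}$, $\{d\}$, $\{\theta\}$, $\{r\}$) form a reference system traced out to produce the physical state Bob holds, and to verify that this choice of reference is the right one for the DQC1-MBQC setting in which Bob's computational power is measured; once that reduction is made, the purity bookkeeping is exactly as in Theorem \ref{t-compact}.
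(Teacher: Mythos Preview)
Your approach differs substantially from the paper's. You average over Alice's secret randomness so that every qubit Bob receives is effectively $I/2$, and then try to rerun Theorem~\ref{t-compact} on an all-mixed input. The paper does the opposite: it fixes Alice's parameters and works with the \emph{physical} states Bob actually holds in a single run, so dummies and the trap are genuinely pure qubits that raise $\pi$ by $1$ when they arrive. The paper's key observation is that each such pure qubit stays in tensor product with the rest of Bob's register throughout its lifetime---entangling a computational-basis dummy via $cZ$ merely applies a local $Z$ to its neighbour, and the trap is isolated by dummies---so when that same qubit is measured and traced out within the step, $\pi$ drops by $1$ again, giving net zero change. The non-dummy, non-trap rows of a trap subgraph then fall under Theorem~\ref{t-compact} verbatim, and the cross-row entanglers are unitaries and hence purity-preserving.

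Your route trades this direct bookkeeping for the ``marginal viewpoint'' you correctly flag as the main obstacle, and that obstacle is genuine. Definition~\ref{def:OCQ} refers to the state $\rho_i$ after each command ``for any measurement outcomes'', which in a single run contains pure dummies; arguing that Bob's power should instead be measured by the purity of his epistemic state requires an extra reduction (essentially a step-by-step blindness argument) that you have not supplied. Separately, your claim that ``each measurement-plus-trace-out on a single qubit changes $\pi$ by at most a constant'' is false in general---on low-probability branches the conditional purity can jump by an unbounded amount---so this step only goes through \emph{after} you have already established that the pre-measurement state is (close to) maximally mixed, making the argument circular as written. The paper sidesteps both issues by never averaging and instead exploiting the product structure of the dummy/trap qubits directly.
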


\begin{proof}
Note that the $s$ patterns are executed in series and Bob does not keep any qubits between executions. The inputs to these patterns are almost maximally mixed, in accordance with the purity requirement and this `mixedness' propagates through both computational and trap subgraphs.  For the computational subgraph (which is not entangled with the rest) the reasoning of the proof of Theorem \ref{t-compact} applies, since this subgraph satisfies the sufficient conditions and no dummy qubits are used. In the case of a trap subgraph $k$ consider first those operations that apply on the isolated trap and dummy subgraph only. Then for each step $(i,j)_k$ of the main iteration of the protocol (where $(i,j)_k$ is a trap or a dummy) a new pure qubit is sent to Bob, which increases the purity parameter by 1. Entangling will not have any effect on the purity parameter. While the measurement does not increase the purity of the qubit since it was already pure (dummy or trap remain always pure through the computation), and tracing out the resulting qubit will decrease the purity by 1. Thus, the whole step will not change the purity. On the other hand, for the remaining operations the reasoning of the proof of Theorem \ref{t-compact} goes through, since this subgraph satisfies the sufficient conditions. Also operations that apply on both subgraphs are all unitaries therefore they do not affect purity.
\end{proof}

Using the definition of verifiability given in Definition \ref{def:verifiability} we prove the main theorem for the existence of a correct and verifiable DQC1-MBQC protocol (Theorem \ref{verifiable_main}). The full proof is given in Appendix \ref{appendix_2}, while here we describe the main steps.

\begin{proof}[Proof of Theorem \ref{verifiable_main} (Sketch)]

Correctness of Protocol \ref{prot:verifiable_many} comes from the fact that the computational subgraph is disentangled from the rest of the computation and if Bob performs the predefined operations, from the correctness of the blind protocol Alice will receive the correct output. Also, in this case, (and since the traps are corrected to cancel the effect of their entanglement with their neighbouring dummies) the measurement of the traps will give the expected result and Alice will accept the computation.

The proof of verifiability follows the same general methodology of the proof of the original VUBQC protocol \cite{fitzsimons2012unconditionally}, except the last part which contains the counting arguments. For the rest we use single indexing for the qubits, where subgraph $G'_i$ consists of $m$ qubits indexed $(i-1)+1$ to $im$. Therefore the total number of qubits in the protocol is $sm$. Parameter $n$ represents the size of the input of each subgraph (parameter $w$ in the protocol).

%%%%%%%%%%

Based on Definition \ref{def:verifiability} we need to bound the probability of the (purified) output collapsing onto the wrong  subspace and accepting that result. To explicitly write the final state $B_j (\nu)$ we need to define the following notations. Alice's chosen random parameters are denoted collectively by $\nu$, a subset of those are related to the traps: $\nu_T$ including $t_g$, $t_k$'s and $\theta_{t_k}$'s for $k \in \{1, \ldots, s\}\setminus t_g$. Also $\nu_C = \{\nu \setminus \nu_T\}$. The projection onto the correct state for each trap $t_k$ is denoted by $\ket{\eta_{t_k}^{\nu_T}}$, where $\ket{\eta_{t_k}^{\nu_T}} = \ket{+_{\theta_{t_k}}}$ when $t_k \in O_k$ and $\ket{\eta_{t_k}^{\nu_T}} = \ket{r_{t_k}}$ otherwise (since the trap has been already measured). $C_{\boldsymbol{r}}$ denotes the Pauli operators that map the output state of the computational subgraph to the correct one.  $c_{\boldsymbol{r}}$ is used to compactly deal with the fact that in the protocol each measured qubit $i$ is decrypted by XOR-ing them with $r_i$, except for the trap qubits which remain uncorrected: $\forall k: (c_{\boldsymbol{r}})_{t_k}=0$. $\rho_{M^{\nu}_k}$ denotes the density matrix representing the total quantum state received by Bob from Alice for each round $k$ of the protocol. A special case is the $t_k$th round  where $\rho_{M^{\nu}_k}$ represents the total state received by Bob together with its purification (not known to Bob). The classical information received by Bob at each elementary step $i$ (measurement angles) are represented by $\ket{\delta_i}$'s.

We allow Bob to have an arbitrary deviation strategy $j$, at each elementary step $i$ which is represented as CPTP map $\mathcal{E}_{i}^j$,  followed  by a Pauli $Z$ measurement of qubit $i$ (since Bob has to produce a classical bit at each step and return it to Alice), which is represented by taking the sum over projectors on the computational basis $\ket{b_i}$, for $b_i \in \{0,1\}$. All measurement operators can be commuted to the end of the computation and all CPTP maps can be gathered to a single map $\mathcal{E}^j$ after Bob has received everything from Alice, so that the failure probability can be written as:
\begin{multline}
p_{incorrect}  =   \sum_{\boldsymbol{b'},\nu} p(\nu) \text{Tr} (P_{\perp}  \bigotimes_{k=1}^{s} \ket{\eta_{t_k}^{\nu_T}} \bra{\eta_{t_k}^{\nu_T}} \nonumber \\
C^{\boldsymbol{b'},\nu_C} \ket{\boldsymbol{b'} + \boldsymbol{c}^{\boldsymbol{r}}}\bra{\boldsymbol{b'}}\mathcal{E}^{j} \left( \bigotimes_{k=1}^{s} \bigotimes_{i=1}^{m-n} \ket{\delta_{(k-1)m+i}^{\boldsymbol{b'},\nu}}\bra{\delta_{(k-1)m+i}^{\boldsymbol{b'},\nu}}  \otimes \rho_{M^{\nu}_k}  \right)   \ket{\boldsymbol{b'}}\bra{\boldsymbol{b'} + \boldsymbol{c}^{\boldsymbol{r}}} C^{\boldsymbol{b'},\nu_C \dagger}) \nonumber
\end{multline}
Our strategy will be to rewrite this probability by introducing the correct execution of the protocol before the attack, on each subgraph $k$: $\mathcal{P}_k = \bigotimes_{i=1}^{m-n} (H_{(k-1)m+i} Z_{(k-1)m+i}(\delta_{(k-1)m+i})) E_{G'_k} $ and at the same time decomposing the attack to the Pauli basis, using general Paulis $\sigma_{i,k}$ applying on qubits $(k-1)m+1 \leq \gamma \leq km$ for each $k$.
\begin{multline}
p_{incorrect}  =   \sum_{\boldsymbol{b'},\nu, v , i, j} \alpha_{vi} \alpha_{vj}^* p(\nu) \text{Tr} (P_{\perp}  \bigotimes_{k=1}^{s} \ket{\eta_{t_k}^{\nu_T}} \bra{\eta_{t_k}^{\nu_T}}
C^{\boldsymbol{b'},\nu_C} \ket{\boldsymbol{b'} + \boldsymbol{c}^{\boldsymbol{r}}}\bra{\boldsymbol{b'}} \nonumber \\ \bigotimes_{k=1}^{s} ( \sigma_{i,k} \left( \mathcal{P}_k  \bigotimes_{i=1}^{m-n} \ket{\delta_{(k-1)m+i}^{\boldsymbol{b'},\nu}}\bra{\delta_{(k-1)m+i}^{\boldsymbol{b'},\nu}}  \otimes \rho_{M^{\nu}_k}  \mathcal{P}^{\dagger}_k \right) \sigma_{j,k}) \ket{\boldsymbol{b'}}\bra{\boldsymbol{b'} + \boldsymbol{c}^{\boldsymbol{r}}} C^{\boldsymbol{b'},\nu_C \dagger} \nonumber
\end{multline}
This way we can characterise which Pauli attacks give non-zero failure probability when the final state is projected on the correct one. For convenience we introduce the following sets for an arbitrary Pauli $\sigma_{i,k}$:

\begin{eqnarray}
A_{i,k} = \{ \gamma \text{ s.t. } \sigma_{i|\gamma}=I \text{ and } (k-1)m+1 \leq \gamma \leq km \} \nonumber \\
B_{i,k} = \{ \gamma \text{ s.t. } \sigma_{i|\gamma}=X \text{ and } (k-1)m+1 \leq \gamma \leq km \} \nonumber \\
C_{i,k} = \{ \gamma \text{ s.t. } \sigma_{i|\gamma}=Y \text{ and } (k-1)m+1 \leq \gamma \leq km \} \nonumber \\
D_{i,k} = \{ \gamma \text{ s.t. } \sigma_{i|\gamma}=Z \text{ and } (k-1)m+1 \leq \gamma \leq km \} \nonumber
\end{eqnarray}
We use the superscript $O$ to denote subsets subject to the constraint $km \geq \gamma \geq km-n+1$. For an arbitrary $t_g$, the only attacks that give the corresponding term of the sum not equal to zero: are those that (i) produce an incorrect measurement result for qubits $(t_g-1) m +1 \leq \gamma \leq t_gm-n$ or (ii) operate non-trivially on qubits $ t_g m - n < \gamma \leq t_g m$.
We denote this condition by $i\in E_{i,t_g}$ and $j\in E_{j,t_g}$:  $|B_{i,t_g}|+|C_{i,t_g}|+|D_{i,t_g}^O| \geq 1$ and $|B_{j,t_g}|+|C_{j,t_g}|+|D_{j,t_g}^O| \geq 1$.

The next step will be to characterise which attacks of these subsets remain undetected by the trap mechanism and try to find an upper bound on their contribution to the failure probability. By applying blindness and observing that only the terms where $\sigma_{i,k}=\sigma_{j,k}$ contribute we obtain the following upper bound (details in Appendix \ref{appendix_2}):

\begin{multline}
p_{incorrect}  \leq \sum_{t_g} \sum_{v, i\in E_{i,t_g}} | \alpha_{vi} |^2 p(t_g) \prod_{k=\{1, \ldots, s\} \setminus t_g}  ( \sum_{\substack{km-n < t_k \leq km,\\ \theta_{t_k}}} p(t_k, \theta_{t_k}) ( \bra{+_{\theta_{t_k}}}  \sigma_{i|t_k} \ket{+_{\theta_{t_k}}}   )^2 \nonumber \\
+ \sum_{\substack{(k-1)m < t_k \leq km-n,\\  r_{t_k}}}  p(t_k,  r_{t_k}) ( \bra{r_{t_k}}  \sigma_{i|t_k}\ket{r_{t_k}} )^2 ) \nonumber
\end{multline}

The rest is based on a counting argument using $\forall k$, $|A_{i,k}|+|B_{i,k}|+|C_{i,k}|+|D_{i,k}|=m$.

\begin{gather}
\begin{split}
p_{incorrect}  \leq \sum_{t_g} \sum_{v, i\in E_{i,t_g}} | \alpha_{vi} |^2 \frac{1}{s} \prod_{k=\{1, \ldots, s\} \setminus t_g} \frac{1}{2m}(  2 |A_{i,k}| + |B_{i,k}^O| +|C_{i,k}^O|+ 2 |D_{i,k} \setminus D_{i,k}^O |)  \nonumber \\
\leq  \sum_{t_g} \sum_{v, i\in E_{i,t_g}} | \alpha_{vi} |^2 \frac{1}{s} \prod_{k=\{1, \ldots, s\} \setminus t_g} \frac{1}{2m}(  2m - |B_{i,k}| - |C_{i,k}| - |D_{i,k}^O|) \nonumber
\end{split}
\end{gather}

We denote the product term $\prod_{k=\{1, 2, 3, \ldots, s\}\setminus z} \frac{1}{2m}(  2m - |B_{i,k}| - |C_{i,k}| - |D_{i,k}^O|)$ as $P_{i,z}$. We also denote each set $\{E_{i,1}^* \cap E_{i,2}^* \cap \ldots \cap E_{i,s}^*\}$, where each term $E_{i,w}^*$ is either $E_{i,w}$ or its complement, $E_{i,w}^C$, depending on whether the $w$-th value of a binary vector $\boldsymbol{y}$ (size $s$) is 1 or 0 respectively, as $W_{i,\boldsymbol{y}}$. Let the function $\# \boldsymbol{y}$ give the number of positions $i$ such that $y_i$=1.

\begin{equation}
= \frac{1}{s} ( \sum_{k=1}^s \sum_{\{\boldsymbol{y} : \#\boldsymbol{y}=k\}} \sum_{i \in W_{i,\boldsymbol{y}},v} ( | \alpha_{vi} |^2 \sum_{\{z:y_z=1\}} P_{i,z})) \nonumber
\end{equation}

The condition $ i \in W_{i,\boldsymbol{y}}$ means that the following conditions hold together: $\{ |B_{i,w}|+|C_{i,w}|+|D_{i,w}^O| \geq 1 : y_w=1\}$,$\{ |B_{i,w}|+|C_{i,w}|+|D_{i,w}^O| = 0 : y_w=0\}$.

\begin{align}
\leq  \frac{1}{s} ( \sum_{k=1}^s \sum_{\{\boldsymbol{y} : \#\boldsymbol{y}=k\}} \sum_{i \in W_{i,\boldsymbol{y}},v}  | \alpha_{vi} |^2 k \left(\frac{2m-1}{2m}\right)^{k-1}) = \frac{1}{s} ( \sum_{k=1}^s c_k k \left(\frac{2m-1}{2m}\right)^{k-1} ) \nonumber
\end{align}

where $c_k = \sum_{\{\boldsymbol{y} : \#\boldsymbol{y}=k\}} \sum_{i \in W_{i,\boldsymbol{y}},v}  | \alpha_{vi} |^2$.

An upper bound on the above expression is:

\begin{equation}
p_{incorrect} <  \frac{2m}{s}
\end{equation}

%which gives the conditions of the theorem.

\end{proof}

\section{Conclusion}

In this paper we present the first study of the delegation of quantum computing in a restricted model of computing and show that the general framework of the verification via blindness could be adapted to the setting of one-pure qubit model. In order to improve the obtained bound on the security parameter two open questions has to be addressed. The first one aims to expand the class of resource states for DQC1 model so that several techniques from the MBQC domain could be applicable here. The second question will complement the first by searching for fault-tolerant schemes based on any new resource state for DQC1 model. More concretely we propose the study of following questions:

\begin{itemize}

\item A sufficient condition for compatibility with DQC1 based on the step-wise determinism criteria is presented in Theorem \ref {t-compact}. Is this approach extendable to weaker notions of determinism such as information preserving maps as defined in \cite{fFlow10}? Which is a necessary condition for a family of MBQC resource states to be universal for the DQC1 computation?

\item Theorem \ref{verifiable_main} presents a scheme for verification where by adjusting the number of rounds one could obtain an $\epsilon$-verifiable delegated DQC1 computing with $\epsilon$ being inverse polynomial on computation size. How can we efficiently amplify this bound to any desired exponential one? Is there a way to adapt the proposed probability amplification method of \cite{fitzsimons2012unconditionally} based on a quantum error correcting code, into the DQC1 model?

\end{itemize}

%We presented a verification scheme for a model of quantum computing considered to have limited power as compared to the full quantum computer but still solves interesting problems not known to have an efficient classical algorithm. The one pure qubit model and other similar models are important to understand the separation between the classical and quantum world in terms of computational power.
%During this course we defined a new complexity class, DQC1-MBQC which adapts the main idea of one pure qubit to the measurement based model. By characterising some resource states we managed to transfer a well known blind protocol to the one clean qubit picture. A further characterisation of resource states in relation to this class and analysis of its properties is an interesting open problem. Finally an experimental demonstration of this protocol would be an interesting future direction.
%

\subparagraph*{Acknowledgements}

We would like to thank Joe Fitzsimons, Vedran Dunjko and Alistair Stewart for useful discussions. AD was supported in part by EPSRC (Grant No. EP/H03031X/1), U.S. EOARD (Grant No. 093020), and the EU Integrated Project SIQS. TK was supported by Mary and Armeane Choksi Postgraduate Scholarship and School of Informatics Graduate School.

\bibliographystyle{unsrt}
\bibliography{BlindQC}

\appendix
\section{Proof of Theorem \ref{thm:blindness}} \label{appendix_1}

\begin{proof}

In this proof of blindness for Protocol  \ref{prot:blind} we use  techniques  developed in \cite{dujko2012thesis}. The basic difference from the proof of \cite{dujko2012thesis} arises from the different order in which Bob receives the states from Alice. Nevertheless, after commuting all CPTP maps into a single operator at the end, the methodology for proving blindness is the same as in the original proof. We give the full  proof here for the sake of clarity.

To prove blindness we do not separate Alice's system into a classical and a quantum part but we consider the whole of Alice's system as quantum. This is a reasonable assumption since a classical system can be viewed as a special case of a quantum system. Therefore, by proving blindness for the more general case we also prove blindness for the special case.

For the sake of clarity we use single indexing for all the qubits of the resource state. The total number of qubits is denoted by $m$ and the number of qubits in a single column of the brickwork state is denoted by $n$.

Our goal will be to explicitly write the state $\sigma_B= \text{Tr}_A (\sigma_{AB})$ that Bob holds at the end of the execution of the protocol. To achieve this we express Bob's behaviour at each step $i$ of the protocol as a collection of completely-positive trace-preserving (CPTP) maps $\mathcal{E}_{i}^{b_i}$, each for every possible classical response $b_i$ from Bob to Alice.

At step 1 of the main loop of the protocol Bob has already been given the top input qubit at position 1 (position $(1,1)$ in the protocol notation) and the qubit at position $f(1)=1+n$ (position $(1,2)$ in the protocol notation) together with the angle for measuring qubit 1 (angle can be represented as a quantum state composed of 3 qubits). State $\text{Tr}_A(\rho_{AB})$ represents Bob's state before the protocol begins and can, in general, be dependent on Alice's secret measurement angles. The state of Bob averaged over all possible choices of Alice and possible classical responses from Bob, after step 1 is:

\begin{equation*}
\sum_{b_{1},r_{1}, \theta_{1}, \theta_{1+n}} \mathcal{E}_{1}^{b_1} \left(  \ket{\delta_{1}^{\theta_{1},r_{1}}} \bra{\delta_{1}^{\theta_{1},r_{1}}} \otimes  \ket{+_{\theta_{1+n}}}\bra{+_{\theta_{1+n}}} \otimes  \ket{+_{\theta_{1}}}\bra{+_{\theta_{1}}} \otimes \text{Tr}_A(\rho_{AB}) \right)
\end{equation*}

Note the all binary parameters in sums range over 0 and 1, ex. $\sum_{b_{1}}$ stands for $\sum_{b_{1}=0}^{1}$ and all angles range over the 8 possible values in $A$.

We can write the state of Bob after step 2 of the main iteration as:

\begin{multline*}
\sum_{b_{2},b_{1},r_{2}, r_{1}, \theta_{2+n}, \theta_{1+n}, \theta_{2}, \theta_{1}} \mathcal{E}_{2}^{b_{2}} \Bigl(  \ket{\delta_{2}^{\theta_{2},r_{2}}}\bra{\delta_{2}^{\theta_{2},r_{2}}}  \otimes \ket{+_{\theta_{2+n}}}\bra{+_{\theta_{2+n}}}  \\ \otimes
 \mathcal{E}_{1}^{b_{1}} \Bigl( \ket{\delta_{1}^{\theta_{1},r_{1}}}\bra{\delta_{1}^{\theta_{1},r_{1}}}  \otimes  \ket{+_{\theta_{1+n}}}\bra{+_{\theta_{1+n}}} \otimes  \ket{+_{\theta_{1}}}\bra{+_{\theta_{1}}} \otimes \text{Tr}_A(\rho_{AB}) \Bigr) \Bigr)
\end{multline*}

Following this analysis, after the last step of the iteration Bob's state will be:

\begin{multline*}
\sigma_B = \sum_{\substack{\boldsymbol{b}_{\leq m-n},\\ \boldsymbol{r}_{\leq m-n}, \boldsymbol{\theta}_{\leq m}}} \mathcal{E}_{m-n}^{b_{m-n}} \Bigl(  \ket{\delta_{m-n}^{\boldsymbol{b}_{<m-n},\boldsymbol{r}_{\leq m-n},\theta_{m-n}}}\bra{\delta_{m-n}^{\boldsymbol{b}_{<m-n},\boldsymbol{r}_{\leq m-n},\theta_{m-n}}}  \otimes \ket{+_{\theta_{m}}}\bra{+_{\theta_{m}}}   \\
\otimes \ldots \otimes \mathcal{E}_{2}^{b_{2}} \Bigl( \ket{\delta_{2}^{\theta_{2},r_{2}}}\bra{\delta_{2}^{\theta_{2},r_{2}}}  \otimes \ket{+_{\theta_{2+n}}}\bra{+_{\theta_{2+n}}}  \\ \otimes
\mathcal{E}_{1}^{b_{1}} \Bigl( \ket{\delta_{1}^{\theta_{1},r_{1}}}\bra{\delta_{1}^{\theta_{1},r_{1}}}  \otimes  \ket{+_{\theta_{1+n}}}\bra{+_{\theta_{1+n}}} \otimes  \ket{+_{\theta_{1}}}\bra{+_{\theta_{1}}} \otimes \text{Tr}_A(\rho_{AB}) \Bigr) \Bigr) \ldots \Bigr)
\end{multline*}

Notation $\boldsymbol{b}_{<m-n}$ stands for all the elements of $\boldsymbol{b}$ with index less than $m-n$.

Collecting all CPTP maps by commuting them with systems which they do not apply on into a single operator $\mathcal{E}$ and rearranging the terms of the tensor product inside gives:

\begin{multline*}
=\sum_{\substack{\boldsymbol{b}_{\leq m-n},\\ \boldsymbol{r}_{\leq m-n}, \boldsymbol{\theta}_{\leq m}}} \mathcal{E}^{\boldsymbol{b}_{\leq m-n}} \Bigl( \bigotimes_{i=m-n}^{m} \ket{+_{\theta_i}}\bra{+_{\theta_i}}  \bigotimes_{i=n+1}^{m-n-1} (\ket{\delta_{i}^{\boldsymbol{b}_{<i},\boldsymbol{r}_{\leq i},\theta_{i}}}\bra{\delta_{i}^{\boldsymbol{b}_{<i},\boldsymbol{r}_{\leq i},\theta_{i}}}  \otimes \ket{+_{\theta_{i}}}\bra{+_{\theta_{i}}} ) \\
 \bigotimes_{i=2}^{n}( \ket{\delta_{i}^{\theta_i, r_i}}\bra{\delta_{i}^{\theta_i, r_i}} ) \otimes  \ket{\delta_{1}^{\theta_1, r_1}}\bra{\delta_{1}^{\theta_1, r_1}} \otimes  \ket{+_{\theta_{1}}}\bra{+_{\theta_{1}}} \otimes \text{Tr}_A(\rho_{AB})  \Bigr)
\end{multline*}

We introduce the controlled unitary:
$$ U =  \prod_{ n+1 \leq i \leq m-n-1, i=1} Z_i(-\delta_i)$$

and rewrite the state as:

\begin{multline*}
\sum_{\substack{\boldsymbol{b}_{\leq m-n},\\ \boldsymbol{r}_{\leq m-n}, \boldsymbol{\theta}_{\leq m}}} \mathcal{E}^{\boldsymbol{b}_{\leq m-n}} \Bigl(U^{\dagger} U \bigotimes_{i=m-n}^{m} \ket{+_{\theta_i}}\bra{+_{\theta_i}}  \bigotimes_{i=n+1}^{m-n-1} (\ket{\delta_{i}^{\boldsymbol{b}_{<i},\boldsymbol{r}_{\leq i},\theta_{i}}}\bra{\delta_{i}^{\boldsymbol{b}_{<i},\boldsymbol{r}_{\leq i},\theta_{i}}}  \otimes \ket{+_{\theta_{i}}}\bra{+_{\theta_{i}}} ) \\
 \bigotimes_{i=2}^{n}( \ket{\delta_{i}^{\theta_i, r_i}}\bra{\delta_{i}^{\theta_i, r_i}} ) \otimes  \ket{\delta_{1}^{\theta_1, r_1}}\bra{\delta_{1}^{\theta_1, r_1}} \otimes  \ket{+_{\theta_{1}}}\bra{+_{\theta_{1}}} U^{\dagger} U \otimes \text{Tr}_A(\rho_{AB})\Bigr)
\end{multline*}

After applying the innermost unitary and absorbing the outermost into the CPTP-map we have:

\begin{multline*}
\sum_{\substack{\boldsymbol{b}_{\leq m-n},\\ \boldsymbol{r}_{\leq m-n}, \boldsymbol{\theta}_{\leq m}}} \mathcal{E'}^{\boldsymbol{b}_{\leq m-n}} \Bigl( \bigotimes_{i=m-n}^{m} \ket{+_{\theta_i}}\bra{+_{\theta_i}} \\
 \bigotimes_{i=n+1}^{m-n-1} \left( \ket{\delta_{i}^{\boldsymbol{b}_{<i},\boldsymbol{r}_{\leq i},\theta_{i}}}\bra{\delta_{i}^{\boldsymbol{b}_{<i},\boldsymbol{r}_{\leq i},\theta_{i}}}  \otimes \ket{+_{-a_{i}^{'\text{  } \boldsymbol{b}_{<i},\boldsymbol{r}_{<i}} - r_i \pi }}\bra{+_{-a_{i}^{'\text{  } \boldsymbol{b}_{<i},\boldsymbol{r}_{<i}} - r_i \pi }} \right) \\
 \bigotimes_{i=2}^{n} \left( \ket{\delta_{i}^{\theta_i, r_i}}\bra{\delta_{i}^{\theta_i, r_i}} \right) \otimes  \ket{\delta_{1}^{\theta_1, r_1}}\bra{\delta_{1}^{\theta_1, r_1}} \otimes  \ket{+_{-a'_{1} - r_1 \pi }}\bra{+_{-a'_{1} - r_1 \pi }} \otimes \text{Tr}_A(\rho_{AB})\Bigr)
\end{multline*}

It is essential for the proof that each term with index $i$ in the tensor products depends only on parameters with index $\leq i$. This allows to break the summations over $\boldsymbol{r}_{\leq m-n}$ and $\boldsymbol{\theta}_{\leq m}$ and calculate them iteratively from left to right, given the following:

\begin{eqnarray}
 \sum_{\theta_i} \ket{+_{\theta_i}}\bra{+_{\theta_i}} = \frac{I_1}{2} \nonumber
\end{eqnarray}

where $I_n = \bigotimes_n I$. Also,

\begin{gather*}
\sum_{r_i, \theta_i} \ket{\delta_{i}^{\boldsymbol{r}_{\leq i},\theta_{i}}}\bra{\delta_{i}^{\boldsymbol{r}_{\leq i},\theta_{i}}}  \otimes \ket{+_{-a_{i}^{'\text{  } \boldsymbol{r}_{<i}} - r_i \pi }}\bra{+_{-a_{i}^{'\text{  } \boldsymbol{r}_{<i}} - r_i \pi }} \\
= \sum_{r_i} \Bigl( \sum_{\theta_{i}} \Bigl( \ket{a_{i}^{'\text{  } \boldsymbol{r}_{<i}} + \theta _{i} + r_{i} \pi}\bra{a_{i}^{'\text{  } \boldsymbol{r}_{<i}} + \theta _{i} + r_{i} \pi} \Bigr) \otimes \ket{+_{-a_{i}^{'\text{  } \boldsymbol{r}_{<i}}  - r_{i} \pi}}\bra{+_{-a_{i}^{'\text{  } \boldsymbol{r}_{<i}}  - r_{i} \pi}} \Bigr)  \\ =
 \sum_{r_{i}} \frac{I_3}{2^3} \otimes  \ket{+_{-a_{i}^{'\text{  } \boldsymbol{r}_{<i}}  - r_{i} \pi}}\bra{+_{-a_{i}^{'\text{  } \boldsymbol{r}_{<i}}  - r_{i} \pi}}     \\ =
  \frac{I_4}{2^4}
\end{gather*}

and

\begin{equation*}
\sum_{r_i, \theta_i} \ket{\delta_{i}^{\theta_i, r_i}}\bra{\delta_{i}^{\theta_i, r_i}} = \frac{I_3}{2^3}
\end{equation*}

This procedure will produce the state:

\begin{equation*}
\sigma_B = \mathcal{E}' \left( \frac{I_{4m-4n+1}}{2^{4m-4n+1}} \otimes \text{Tr}_A(\rho_{AB}) \right) = \mathcal{E}''( \text{Tr}_A(\rho_{AB}))
\end{equation*}

where $\mathcal{E}''$ is some CPTP map. Therefore Definition \ref{def:blindness} is satisfied.

\end{proof}

\section{Proof of Theorem \ref{verifiable_main}} \label{appendix_2}

\begin{proof}

The same notation is used as in Section \ref{section_3}. The first step is to write the state of Alice's system at the end of the execution of the protocol for fixed Bob's behaviour $j$ and choices of Alice $\nu$. We have utilised the fact that all measurements can be moved to the end. Also, we have commuted all Bob's operations to the end (before the measurements) merging them to a single CPTP map. The state of Alice is:

\begin{multline}
B_j(\nu)  =  \sum_{\boldsymbol{b}}  \otimes_{i=k}^{s} \ket{+_{\theta_{t_k}+b_{t_k} \pi}} \bra{+_{\theta_{t_k}+b_{t_k} \pi}} C^{\boldsymbol{b},\nu_C} \ket{\boldsymbol{b} + \boldsymbol{c}^{\boldsymbol{r}}}\bra{\boldsymbol{b}} \nonumber \\
\mathcal{E}^{j} \left( \bigotimes_{k=1}^{s} \bigotimes_{i=1}^{m-n} \ket{\delta_{(k-1)m+i}^{\boldsymbol{b},\nu}}\bra{\delta_{(k-1)m+i}^{\boldsymbol{b},\nu}}  \otimes \rho_{M^{\nu}_k}  \right)
\ket{\boldsymbol{b}}\bra{\boldsymbol{b} + \boldsymbol{c}^{\boldsymbol{r}}} C^{\boldsymbol{b},\nu_C \dagger}  \otimes_{i=k}^{s} \ket{+_{\theta_{t_k}+b_{t_k} \pi}} \bra{+_{\theta_{t_k}+b_{t_k} \pi}} \nonumber
\end{multline}

where $\ket{+_{\theta_{t_k}+b_{t_k} \pi}} \bra{+_{\theta_{t_k}+b_{t_k} \pi}}$ are used to define Alice's measurement of the traps which are part of the output state of each round $k$ (if they exist).

To bound the failure probability, observe that projectors orthogonal to $\ket{\eta_{t_k}^{\nu_T}}$'s vanish, thus we have (where $b'=\{b_i\}_{i \neq t_1 \ldots t_s}$):

\begin{multline}
p_{incorrect}  =   \sum_{\boldsymbol{b'},\nu} p(\nu) \text{Tr} (P_{\perp}  \bigotimes_{k=1}^{s} \ket{\eta_{t_k}^{\nu_T}} \bra{\eta_{t_k}^{\nu_T}} \nonumber \\
C^{\boldsymbol{b'},\nu_C} \ket{\boldsymbol{b'} + \boldsymbol{c}^{\boldsymbol{r}}}\bra{\boldsymbol{b'}}\mathcal{E}^{j} \left( \bigotimes_{k=1}^{s} \bigotimes_{i=1}^{m-n} \ket{\delta_{(k-1)m+i}^{\boldsymbol{b'},\nu}}\bra{\delta_{(k-1)m+i}^{\boldsymbol{b'},\nu}}  \otimes \rho_{M^{\nu}_k}  \right)   \ket{\boldsymbol{b'}}\bra{\boldsymbol{b'} + \boldsymbol{c}^{\boldsymbol{r}}} C^{\boldsymbol{b'},\nu_C \dagger}) \nonumber
\end{multline}

We introduce the following unitary, which characterises the correct operation on each subgraph $k$: $\mathcal{P}_k = \bigotimes_{i=1}^{m-n} (H_{(k-1)m+i} Z_{(k-1)m+i}(\delta_{(k-1)m+i})) E_{G'_k} $.

We can rewrite the failure probability, introducing $\mathcal{P}^{\dagger}_k \mathcal{P}_k$'s on both sides of the quantum state of the system before the attack, and absorbing the outermost unitaries into the updated CPTP map $\mathcal{E'}^j$:

\begin{multline}
p_{incorrect}  =   \sum_{\boldsymbol{b'},\nu} p(\nu) \text{Tr} (P_{\perp}  \bigotimes_{k=1}^{s} \ket{\eta_{t_k}^{\nu_T}} \bra{\eta_{t_k}^{\nu_T}} C^{\boldsymbol{b'},\nu_C}\nonumber \\
 \ket{\boldsymbol{b'} + \boldsymbol{c}^{\boldsymbol{r}}}\bra{\boldsymbol{b'}}\mathcal{E'}^{j} \left( \bigotimes_{k=1}^{s} (\mathcal{P}_k \bigotimes_{i=1}^{m-n} \ket{\delta_{(k-1)m+i}^{\boldsymbol{b'},\nu}}\bra{\delta_{(k-1)m+i}^{\boldsymbol{b'},\nu}}  \otimes \rho_{M^{\nu}_k}  \mathcal{P}^{\dagger}_k) \right)  \ket{\boldsymbol{b'}}\bra{\boldsymbol{b'} + \boldsymbol{c}^{\boldsymbol{r}}} C^{\boldsymbol{b'},\nu_C \dagger}) \nonumber
\end{multline}

We decompose $\mathcal{E'}^{j}$ using the following facts: There exist some matrices $\{ \chi_v \}$ of dimension $s(4m-3n) \times s(4m-3n)$, with $\sum_v \chi_v \chi_v^{\dagger} = I$ such that for every density operator $\rho$: $\mathcal{E'}^{j}(\rho) = \sum_v \chi_v \rho \chi_v^{\dagger}$. Also, each $\chi_v$ can be decomposed to the Pauli basis: $\chi_v = \sum_i \alpha_{vi}\sigma_i$, with $\sum_{v,i} \alpha_{vi} \alpha_{vi}^* = 1$. Setting $\sigma_{i,k}$ to be the part of $\sigma_{i}$ that applies on the qubits $(k-1)m+1 \leq \gamma \leq km$.

\begin{multline}
p_{incorrect}  =   \sum_{\boldsymbol{b'},\nu, v , i, j} \alpha_{vi} \alpha_{vj}^* p(\nu) \text{Tr} (P_{\perp}  \bigotimes_{k=1}^{s} \ket{\eta_{t_k}^{\nu_T}} \bra{\eta_{t_k}^{\nu_T}} C^{\boldsymbol{b'},\nu_C} \nonumber \\
 \ket{\boldsymbol{b'} + \boldsymbol{c}^{\boldsymbol{r}}}\bra{\boldsymbol{b'}} \bigotimes_{k=1}^{s} ( \sigma_{i,k} \left( \mathcal{P}_k  \bigotimes_{i=1}^{m-n} \ket{\delta_{(k-1)m+i}^{\boldsymbol{b'},\nu}}\bra{\delta_{(k-1)m+i}^{\boldsymbol{b'},\nu}}  \otimes \rho_{M^{\nu}_k}  \mathcal{P}^{\dagger}_k \right) \sigma_{j,k}) \ket{\boldsymbol{b'}}\bra{\boldsymbol{b'} + \boldsymbol{c}^{\boldsymbol{r}}} C^{\boldsymbol{b'},\nu_C \dagger} \nonumber
\end{multline}

Without loss of generality we can assume that $\sigma_i$, $\sigma_j$ do not change the $\delta$'s.

For an arbitrary $t_g$, the only attacks that give the corresponding term of the sum not equal to zero:

\begin{multline*}
P_{\perp} ( C^{\boldsymbol{b'},\nu_C}  \ket{\boldsymbol{b'}}\bra{\boldsymbol{b'} + \boldsymbol{c}^{\boldsymbol{r}}} \sigma_{i,t_g} \\ ( \mathcal{P}_{t_g}  \bigotimes_{i=1}^{m-n} \ket{\delta_{(t_g-1)m+i}^{\boldsymbol{b'},\nu}}\bra{\delta_{(t_g-1)m+i}^{\boldsymbol{b'},\nu}}  \otimes \rho_{M^{\nu}_{t_g}}  \mathcal{P}^{\dagger}_{t_g} ) \sigma_{j,t_g}  \ket{\boldsymbol{b'}}\bra{\boldsymbol{b'} + \boldsymbol{c}^{\boldsymbol{r}}} C^{\boldsymbol{b'},\nu_C \dagger} )\neq 0
\end{multline*}

are those that (i) produce an incorrect measurement result for qubits $(t_g-1) m +1 \leq \gamma \leq t_gm-n$ or (ii) operate non-trivially on qubits $ t_g m - n < \gamma \leq t_g m$. We denote this condition by $i\in E_{i,t_g}$ and $j\in E_{j,t_g}$.

We can rewrite the probability by eliminating $P_{\perp}$ (observing that it applies to a positive operator) and $C^{\boldsymbol{b'},\nu_C}$ (by the cyclical property of the trace):

\begin{multline}
p_{incorrect}  \leq   \sum_{\nu, v , i \in E_{i,t_g}, j \in E_{j,t_g}}  \alpha_{vi} \alpha_{vj}^* p(\nu) \prod_{k=1}^{s} \text{Tr} (  \ket{\eta_{t_k}^{\nu_T}} \bra{\eta_{t_k}^{\nu_T}} \nonumber \\
\ket{\boldsymbol{b'}}\bra{\boldsymbol{b'} + \boldsymbol{c}^{\boldsymbol{r}}} \sigma_{i,k} \left( \mathcal{P}_k  \bigotimes_{i=1}^{m-n} \ket{\delta_{(k-1)m+i}^{\boldsymbol{b'},\nu}}\bra{\delta_{(k-1)m+i}^{\boldsymbol{b'},\nu}}  \otimes \rho_{M^{\nu}_k}  \mathcal{P}^{\dagger}_k \right) \sigma_{j,k}) \nonumber
\end{multline}

We extract a trace over R from $\rho_{M^{\nu}_{t_g}}$. And extract the sums over $\nu_{C,k}$'s from the general sum, where $\nu_{C,k}$ is the subset of random parameters $\nu_C$ that are used for the computation of round $r$:

\begin{multline}
=   \sum_{\nu_T, v , i \in E_{i,t_g}, j \in E_{j,t_g}}  \alpha_{vi} \alpha_{vj}^* p(\nu_T) \prod_{k=1}^{s} \text{Tr} (  \ket{\eta_{t_k}^{\nu_T}} \bra{\eta_{t_k}^{\nu_T}} \nonumber \\
\ket{\boldsymbol{b'}}\bra{\boldsymbol{b'} + \boldsymbol{c}^{\boldsymbol{r}}} \sigma_{i,k} \left( \mathcal{P}_k  \sum_{\nu_{C,k}} (p(\nu_{C,k})  \bigotimes_{i=1}^{m-n} \ket{\delta_{(k-1)m+i}^{\boldsymbol{b'},\nu}}\bra{\delta_{(k-1)m+i}^{\boldsymbol{b'},\nu}}  \otimes \text{Tr}_R (\rho_{M^{\nu}_k} ) ) \mathcal{P}^{\dagger}_k \right) \sigma_{j,k}) \nonumber
\end{multline}

To take advantage of the blindness property we use the following lemma where the proof is given later.

\begin{lemma} [Blindness (excluding the traps)] \label{usefullemma_second}

\begin{gather}
 \forall k, \sum_{\nu_{C,k}} p(\nu_{C,k})   \bigotimes_{i=1}^{m-n} \ket{\delta_{(k-1)m+i}^{\boldsymbol{b'},\nu}}\bra{\delta_{(k-1)m+i}^{\boldsymbol{b'},\nu}}  \otimes \text{Tr}_R (\rho_{M^{\nu}_k} ) \nonumber \\
  =  \frac{I^{t_k}_k}{\text{Tr}(I^{t_k}_k)} \otimes  \ket{\delta_{t_k}^{\theta_{t_k}, r_{t_k}}} \bra{\delta_{t_k}^{\theta_{t_k}, r_{t_k}}} \otimes \ket{+_{\theta _{t_k}}} \bra{+_{\theta _{t_k}}} \nonumber
\end{gather}

If $k \neq t_g$, $I^{t_k}_k = \bigotimes_{4m-3n-1} I$ when $km-n < t_k \leq km $ and $I^{t_k}_k = \bigotimes_{4m-3n-4} I$ when $(k-1)m < t_k \leq km-n$ . And if $k=t_g$, $I^{t_k}_k = \bigotimes_{4m-3n} I$.
\end{lemma}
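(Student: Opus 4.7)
}

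The plan is to mirror the technique used in Appendix \ref{appendix_1} (the proof of Theorem \ref{thm:blindness}), but restricted to a single round $k$ and averaged only over the computational parameters $\nu_{C,k}$, leaving the trap parameters $(\theta_{t_k},r_{t_k})$ intact. First I would identify the components of $\nu_{C,k}$: for round $k\neq t_g$ these are $\{\theta_i,r_i\}$ for each non-trap vertex $i\in G'_k$ on the $(X,Y)$ plane (with measurement angle $a'_i=0$), together with the dummy bits $\{d_i\}_{i\in D\cap G'_k}$; the trap's $(\theta_{t_k},r_{t_k})$ and the round selector $t_g$ are deliberately excluded, since they belong to $\nu_T$. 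For $k=t_g$ the set $\nu_{C,k}$ contains every $(\theta_i,r_i)$ used in the computation subgraph, so after averaging nothing survives except a global identity factor, which is why $I^{t_k}_k$ has full dimension $4m-3n$ in that case (the symbols $\ket{\delta_{t_k}}\bra{\delta_{t_k}}$ and $\ket{+_{\theta_{t_k}}}\bra{+_{\theta_{t_k}}}$ in the statement should be read as absent / absorbed into $I^{t_k}_k$).

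Next I would carry out the averaging site-by-site, using exactly the two identities established in the proof of Theorem \ref{thm:blindness}:
\begin{gather*}
\sum_{\theta_i\in A}\tfrac{1}{8}\ket{+_{\theta_i}}\bra{+_{\theta_i}}=\tfrac{I}{2},\\
\sum_{\theta_i,r_i}\tfrac{1}{16}\,\ket{\delta_i^{\boldsymbol{b'},\boldsymbol{r}_{<i},\theta_i}}\bra{\delta_i^{\boldsymbol{b'},\boldsymbol{r}_{<i},\theta_i}}\otimes\ket{+_{-a'_i-r_i\pi}}\bra{+_{-a'_i-r_i\pi}}=\tfrac{I}{2^4}.
\end{gather*}
As in Appendix \ref{appendix_1}, the second identity is applied after conjugating by the controlled unitary $U=\prod_i Z_i(-\delta_i)$, which converts each $\ket{+_{\theta_i}}\bra{+_{\theta_i}}$ into $\ket{+_{-a'_i-r_i\pi}}\bra{+_{-a'_i-r_i\pi}}$ and is absorbed into the surrounding maps (here, into the purification / reference system $R$, via the partial trace $\mathrm{Tr}_R$). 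Because $\delta_i$ only depends on parameters of index $\le i$, the sums may be carried out in flow order and each site contributes a factor of identity of the appropriate dimension. For dummy vertices $i\in D$, the state $\ket{d_i}\bra{d_i}$ together with the compensating factors $Z^{d_i}$ on neighbouring non-dummy $\ket{+_{\theta}}$'s decouples after the sum over $d_i\in\{0,1\}$: the $d_i$ dependence of the neighbouring prepared state is a Pauli $Z$ that commutes through the preparation and is re-absorbed when we sum over that neighbour's $\theta$ (using the first identity above), so the dummy contributes an extra $I/2$ factor and the rewritten state of the trap's neighbourhood is unaffected.

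The only components of $\nu_{C,k}$ that couple to the trap vertex $t_k$ are the $d_i$'s of its dummy neighbours and (for $k\neq t_g$) the $\theta_i,r_i$'s of the non-trap qubits on the same subgraph; all of these decouple by the argument above. The trap's own $(\theta_{t_k},r_{t_k})$ are not summed, so $\ket{+_{\theta_{t_k}}}\bra{+_{\theta_{t_k}}}$ survives verbatim, and when $t_k$ is an internal (measured) vertex so does the three-qubit classical register $\ket{\delta_{t_k}^{\theta_{t_k},r_{t_k}}}\bra{\delta_{t_k}^{\theta_{t_k},r_{t_k}}}$; when $t_k$ is in the output layer there is no corresponding $\delta_{t_k}$ register, accounting for the dimension drop from $4m-3n-4$ to $4m-3n-1$ in the statement. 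Collecting the identity factors produced at every non-trap site and the purification trace yields exactly the announced tensor product.

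The main obstacle I anticipate is bookkeeping rather than conceptual: one must be careful that the $\delta_i$'s appearing inside $\rho_{M^\nu_k}$ after the action of $\mathcal{P}_k$ (which contains $Z_i(\delta_i)$ rotations) are still indexed in a flow-compatible way so that the iterative $U^{\dagger}U$ insertion of Appendix \ref{appendix_1} goes through, and that the dummy-induced $Z^{d_i}$ factors on trap neighbours do not accidentally disturb the trap itself; the flow condition (F2) guarantees that the trap $t_k$ is not a neighbour of any dummy appearing in the pre-rotation chain (since traps and dummies are on disjoint rows by construction step \ref{step_3} of Protocol \ref{prot:verifiable_many}), which is precisely what keeps the trap factor separate from the maximally mixed part.
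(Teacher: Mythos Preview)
Your overall strategy --- mirror Appendix \ref{appendix_1} and average site by site over $\nu_{C,k}$ --- is exactly what the paper does. But there is a genuine gap in your handling of the dummy pre-rotations, and the justification you give for ignoring it is factually wrong.

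You claim in the last paragraph that ``the trap $t_k$ is not a neighbour of any dummy \ldots\ since traps and dummies are on disjoint rows by construction step \ref{step_3}.'' Re-read step \ref{step_3} of Protocol \ref{prot:verifiable_many}: the dummies are placed precisely in rows $t_x$ and $t'_x$, i.e.\ in the \emph{same} row as the trap (and the vertically adjacent one), with only the trap itself excluded. Hence every neighbour of the trap is a dummy, and the trap is prepared as $\bigl(\prod_{j\in D,\,j\sim t_k}Z^{d_j}\bigr)\ket{+_{\theta_{t_k}}}$. Because $\theta_{t_k}\in\nu_T$ is \emph{not} summed, your argument that the $Z^{d_j}$ factors are ``re-absorbed when we sum over that neighbour's $\theta$'' does not apply to the trap: after summing over the $d_j$'s you would be left with a mixture of $\ket{+_{\theta_{t_k}}}$ and $Z\ket{+_{\theta_{t_k}}}$ entangled (classically) with the dummy register, not the clean tensor factor $\ket{+_{\theta_{t_k}}}\bra{+_{\theta_{t_k}}}$ the lemma asserts.

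The paper fixes this by enlarging the controlled unitary: instead of your $U=\prod_i Z_i(-\delta_i)$ it uses
\[
\mathcal{P}'_k \;=\; \Bigl(\prod_{i\notin D,\ i\le km-n} Z_i(-\delta_i)\Bigr)\,\prod_{i\notin D_k}\ \prod_{j\sim i,\ j\in D_k} Z_i(d_j),
\]
whose second block explicitly undoes the dummy pre-rotations on every non-dummy qubit, including the trap. After the inner $\mathcal{P}'_k$ is applied, the trap carries no $d_j$-dependence and the site-by-site averaging goes through; the outer $\mathcal{P}_k^{\prime\dagger}(\,\cdot\,)\mathcal{P}'_k$ is then shown to act trivially on the resulting (identity $\otimes$ trap) state. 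This last point is also where your proposal slips: the outermost unitary cannot be ``absorbed into the partial trace $\mathrm{Tr}_R$'' --- $\mathcal{P}'_k$ acts on Bob's qubits, not on the reference system --- so unlike Appendix \ref{appendix_1} (where $U^\dagger$ disappeared into Bob's CPTP map) here you must verify directly that the sandwich leaves the final state invariant.
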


Lemma \ref{usefullemma_second} allows us to simplify the big sum above based on the position of the traps. We also sum over $\boldsymbol{b'}$ since there are no longer any dependencies on it in the sum, obtaining:

\begin{multline}
= \sum_{t_g, v, i\in E_{i,t_g},j\in E_{j,t_g}} \alpha_{vi} \alpha_{vj}^* p(t_g) \prod_{k=1}^{s} \text{Tr}  ( \nonumber \\
\sum_{\substack{km-n < t_k \leq km,\\ \theta_{t_k}}} p(t_k, \theta_{t_k}) \ket{+_{\theta_{t_k}}} \bra{+_{\theta_{t_k}}}  \sigma_{i,k}( \frac{\mathcal{I}}{\text{Tr}(\mathcal{I})}  \otimes \ket{+_{\theta_{t_k}}} \bra{+_{\theta_{t_k}}}   ) \sigma_{j,k} \nonumber \\
+ \sum_{\substack{(k-1)m < t_k \leq km-n,\\  r_{t_k}}} p(t_k,  r_{t_k}) \ket{r_{t_k}} \bra{r_{t_k}}  \sigma_{i,k}(\frac{\mathcal{I}}{\text{Tr}(\mathcal{I})} \otimes  \ket{r_{t_k}} \bra{r_{t_k}}  ) \sigma_{j,k}) \nonumber
\end{multline}

where $\mathcal{I}=\bigotimes_{4m-3n-1} I$ when $k \neq t_g$. And $\mathcal{I} = \bigotimes_{4m-3n} I$ when $k=t_g$.

Note that $\sum_{\theta_{t_k}} \text{Tr} (\ket{+_{\theta_{t_k}}} \bra{+_{\theta_{t_k}}}  \sigma_{i,k} ( \frac{\mathcal{I}}{\text{Tr}(\mathcal{I})} \otimes  \ket{+_{\theta_{t_k}}} \bra{+_{\theta_{t_k}}} ) \sigma_{j,k})$ is zero if $\sigma_{i,k} \neq \sigma_{j,k}$. The same is true for $\sum_{r_{t_k}} \text{Tr} (\ket{r_{t_k}} \bra{r_{t_k}}  \sigma_{i,k}( \frac{\mathcal{I}}{\text{Tr}(\mathcal{I})} \otimes \ket{r_{t_k}} \bra{r_{t_k}} ) \sigma_{j,k})$. Therefore we can only keep those terms where $\sigma_{i,k} = \sigma_{j,k}$ and the failure probability becomes:

\begin{multline}
= \sum_{t_g} \sum_{v, i\in E_{i,t_g}} | \alpha_{vi} |^2 p(t_g) \prod_{k=\{1, \ldots, s\} \setminus t_g}  ( \sum_{\substack{km-n < t_k \leq km,\\ \theta_{t_k}}} p(t_k, \theta_{t_k}) ( \bra{+_{\theta_{t_k}}}  \sigma_{i|t_k} \ket{+_{\theta_{t_k}}}   )^2 \nonumber \\
+ \sum_{\substack{(k-1)m < t_k \leq km-n,\\  r_{t_k}}}  p(t_k,  r_{t_k}) ( \bra{r_{t_k}}  \sigma_{i|t_k}\ket{r_{t_k}} )^2 ) \nonumber
\end{multline}

The rest of the proof is based on a counting argument. For convenience we introduce the following sets for an arbitrary Pauli $\sigma_{i,k}$:

\begin{gather}
A_{i,k} = \{ \gamma \text{ s.t. } \sigma_{i|\gamma}=I \text{ and } (k-1)m+1 \leq \gamma \leq km \} \nonumber \\
B_{i,k} = \{ \gamma \text{ s.t. } \sigma_{i|\gamma}=X \text{ and } (k-1)m+1 \leq \gamma \leq km \} \nonumber \\
C_{i,k} = \{ \gamma \text{ s.t. } \sigma_{i|\gamma}=Y \text{ and } (k-1)m+1 \leq \gamma \leq km \} \nonumber \\
D_{i,k} = \{ \gamma \text{ s.t. } \sigma_{i|\gamma}=Z \text{ and } (k-1)m+1 \leq \gamma \leq km \} \nonumber
\end{gather}

and use the superscript $O$ to denote subsets subject to the constraint $km \geq \gamma \geq km-n+1$.

The failure probability is then:

\begin{multline}
= \sum_{t_g} \sum_{v, i\in E_{i,t_g}} | \alpha_{vi} |^2 \frac{1}{s} \prod_{k=\{1, \ldots, s\} \setminus t_g} ( ( \frac{1}{8m} (8 |A_{i,k}^O |+ 4 |B_{i,k}^O| + 4 |C_{i,k}^O|  )  + \nonumber \\
\frac{1}{2m} (2 |A_{i,k} \setminus A_{i,k}^O| + 2 | D_{i,k} \setminus D_{i,k}^O| ) )  \nonumber
\end{multline}

Merging the terms:

\begin{equation}
= \sum_{t_g} \sum_{v, i\in E_{i,t_g}} | \alpha_{vi} |^2 \frac{1}{s} \prod_{k=\{1, \ldots, s\} \setminus t_g} \frac{1}{2m}(  2 |A_{i,k}| + |B_{i,k}^O| +|C_{i,k}^O|+ 2 |D_{i,k} \setminus D_{i,k}^O |)  \nonumber
\end{equation}

Using the fact that for every $k$, $|A_{i,k}|+|B_{i,k}|+|C_{i,k}|+|D_{i,k}|=m$:

\begin{equation}
\leq  \sum_{t_g} \sum_{v, i\in E_{i,t_g}} | \alpha_{vi} |^2 \frac{1}{s} \prod_{k=\{1, \ldots, s\} \setminus t_g} \frac{1}{2m}(  2m - |B_{i,k}| - |C_{i,k}| - |D_{i,k}^O|) \nonumber
\end{equation}

The conditions $i\in E_{i,t_g}$ that we obtained at the first part of the proof are translated to $|B_{i,t_g}|+|C_{i,t_g}|+|D_{i,t_g}^O| \geq 1$. In order to be able to use these conditions we need to rewrite the formula. First we expand it:

\begin{multline}
= \frac{1}{s}( \sum_{v, i\in E_{i,1}} | \alpha_{vi} |^2  \prod_{k=\{2, 3, \ldots, s\}} \frac{1}{2m} (  2m - |B_{i,k}| - |C_{i,k}| - |D_{i,k}^O|) \nonumber \\
+ \sum_{v, i\in E_{i,2}} | \alpha_{vi} |^2  \prod_{k=\{1, 3, 4, \ldots, s\}} \frac{1}{2m}(  2m - |B_{i,k}| - |C_{i,k}| - |D_{i,k}^O|) \nonumber \\
\ldots + \sum_{v, i\in E_{i,d}} | \alpha_{vi} |^2  \prod_{k=\{1, 2, \ldots, s-1\}} \frac{1}{2m} (  2m - |B_{i,k}| - |C_{i,k}| - |D_{i,k}^O|)) \nonumber
\end{multline}

We denote the product term $\prod_{k=\{1, 2, 3, \ldots, s\}\setminus z} \frac{1}{2m}(  2m - |B_{i,k}| - |C_{i,k}| - |D_{i,k}^O|)$ as $P_{i,z}$. We also denote each set $\{E_{i,1}^* \cap E_{i,2}^* \cap \ldots \cap E_{i,s}^*\}$, where each term $E_{i,w}^*$ is either $E_{i,w}$ or its complement, $E_{i,w}^C$, depending on whether the $w$-th value of a binary vector $\boldsymbol{y}$ (size $s$) is 1 or 0 respectively, as $W_{i,\boldsymbol{y}}$. Then we have:

\begin{equation}
= \frac{1}{s} ( \sum_{\boldsymbol{y}\setminus (0 \ldots 0)} \sum_{i \in W_{i,\boldsymbol{y}},v} ( | \alpha_{vi} |^2 \sum_{\{z:y_z=1\}} P_{i,z})) \nonumber
\end{equation}

Let the function $\# \boldsymbol{y}$ give the number of positions $i$ such that $y_i$=1.

\begin{equation}
= \frac{1}{s} ( \sum_{k=1}^s \sum_{\{\boldsymbol{y} : \#\boldsymbol{y}=k\}} \sum_{i \in W_{i,\boldsymbol{y}},v} ( | \alpha_{vi} |^2 \sum_{\{z:y_z=1\}} P_{i,z})) \nonumber
\end{equation}

We separately consider the following term for any arbitrary $\boldsymbol{y}$ with $\#\boldsymbol{y}=r$.

\begin{equation}
 \sum_{i \in W_{i,\boldsymbol{y}}} ( | \alpha_{vi} |^2 \sum_{\{z:y_z=1\}} P_{i,z}) \nonumber
\end{equation}

The condition $ i \in W_{i,\boldsymbol{y}}$ means that the following conditions hold together: $\{ |B_{i,w}|+|C_{i,w}|+|D_{i,w}^O| \geq 1 : y_w=1\}$,$\{ |B_{i,w}|+|C_{i,w}|+|D_{i,w}^O| = 0 : y_w=0\}$. We expand:

\begin{gather}
= \sum_{i \in W_{i,\boldsymbol{y}}} ( | \alpha_{vi} |^2 \sum_{\{z:y_z=1\}} \prod_{k=\{1, 2, 3, \ldots, s\}\setminus z} \frac{1}{2m}(  2m - |B_{i,k}| - |C_{i,k}| - |D_{i,k}^O|) \nonumber \\
= \sum_{i \in W_{i,\boldsymbol{y}}} ( | \alpha_{vi} |^2 \sum_{\{z:y_z=1\}} \prod_{\{k: y_k=1, k \neq z\}} \frac{1}{2m}(  2m - |B_{i,k}| - |C_{i,k}| - |D_{i,k}^O|) \nonumber \\ \prod_{\{k: y_k=0\}} \frac{1}{2m}(  2m - |B_{i,k}| - |C_{i,k}| - |D_{i,k}^O|) \nonumber
\end{gather}

And by using the above conditions:

\begin{gather}
\leq \sum_{i \in W_{i,\boldsymbol{y}}} ( | \alpha_{vi} |^2 \sum_{\{z:y_z=1\}} \prod_{\{k: y_k=1, k \neq z\}} \frac{1}{2m}(  2m - 1)  \prod_{\{k: y_k=0\}} \frac{1}{2m}(  2m ) \nonumber \\
= \sum_{i \in W_{i,\boldsymbol{y}}} ( | \alpha_{vi} |^2 \sum_{\{z:y_z=1\}} \left(\frac{2m-1}{2m}\right)^{r-1} \nonumber \\
= \sum_{i \in W_{i,\boldsymbol{y}}}  | \alpha_{vi} |^2 r \left(\frac{2m-1}{2m}\right)^{r-1} \nonumber
\end{gather}

Therefore the bound of our failure probability will be:

\begin{align}
p_{\text{incorrect}} &\leq  \frac{1}{s} ( \sum_{k=1}^s \sum_{\{\boldsymbol{y} : \#\boldsymbol{y}=k\}} \sum_{i \in W_{i,\boldsymbol{y}},v}  | \alpha_{vi} |^2 k \left(\frac{2m-1}{2m}\right)^{k-1})  \nonumber \\
&= \frac{1}{s} ( \sum_{k=1}^s k \left(\frac{2m-1}{2m}\right)^{k-1} \sum_{\{\boldsymbol{y} : \#\boldsymbol{y}=k\}} \sum_{i \in W_{i,\boldsymbol{y}},v}  | \alpha_{vi} |^2 )  \nonumber \\
&= \frac{1}{s} ( \sum_{k=1}^s c_k k \left(\frac{2m-1}{2m}\right)^{k-1} ) \nonumber
\end{align}

where $c_k = \sum_{\{\boldsymbol{y} : \#\boldsymbol{y}=k\}} \sum_{i \in W_{i,\boldsymbol{y}},v}  | \alpha_{vi} |^2$

subject to conditions:

\begin{equation}
\sum_{k=1}^{s} c_k \leq 1
\end{equation}

and

\begin{equation}
\forall k: c_k \geq 0
\end{equation}

\end{proof}

\begin{proof} [Proof of Lemma \ref{usefullemma_second}]

First we define state $\ket{q_i}$ as:

\begin{align*}
i \in D & &  \ket{q_i} &\equiv\ket{d_i} \\
i \notin D  & & \ket{q_i}&\equiv(\prod_{\{j : j \sim i, j \in D \}} Z^{d_j}) \ket{+_{\theta _{i}}}
\end{align*}

By substituting $\rho_{M^{\nu}_k}$'s and taking the trace over R:

If $k \neq t_g$ the state becomes:

\begin{multline}
\sum_{\nu_{C,k}} p(\nu_{C,k}) ( \bigotimes_{i=km-n+1}^{km} \ket{q_i}\bra{q_i} \bigotimes_{i=(k-1)m+n+1}^{km-n} \left( \ket{\delta_{i}^{\boldsymbol{b'},\nu}}\bra{\delta_{i}^{\boldsymbol{b'},\nu}} \otimes \ket{q_i^{\nu}}\bra{q_i^{\nu}} \right) \nonumber \\
 \bigotimes_{i=1}^{2} \left( \ket{\delta_{p_{i,k}}^{\boldsymbol{b'},\nu}}\bra{\delta_{p_{i,k}}^{\boldsymbol{b'},\nu}} \otimes \ket{q_{p_{i,k}}^{\nu}} \bra{q_{p_{i,k}}^{\nu}} \right)  \otimes I_{4(n-2)}/2^{4(n-2)} ) \nonumber
\end{multline}

where $\ket{q_{p_{i,k}}^{\nu}}$ denote the first layer pure qubits (a maximum of two) of the $k$-th graph state, used as padding (dummies) or trap and their positions are defined as: $1+(k-1)m \leq \{ p_{1,k}, p_{2,k} \} \leq n+(k-1)m$.

Otherwise, if $k = t_g$ the state becomes:

\begin{multline}
\sum_{\nu_{C,k}} p(\nu_{C,k})  ( \bigotimes_{i=t_g m-n+1}^{t_g m} \ket{q_i}\bra{q_i} \bigotimes_{i=(t_g -1)m+n+1}^{t_g m-n} \left( \ket{\delta_{i}^{\boldsymbol{b'},\nu}}\bra{\delta_{i}^{\boldsymbol{b'},\nu}} \otimes \ket{q_i^{\nu}}\bra{q_i^{\nu}} \right) \nonumber \\
  \otimes \ket{\delta_{u}^{\theta_{u}, r_{u}}}\bra{\delta_{u}^{\theta_{u}, r_{u}}} \otimes \ket{q_{u}^{\theta_{u}}}\bra{q_{u}^{\theta_{u}}} \otimes I_{4(w-1)}/2^{4(w-1)} ) \nonumber
\end{multline}

where $u=(t_g-1)m+1$ is the position of the single pure qubit of the input to the DQC1-MBQC computation.

An implicit assumption was that all $\delta$'s that are used to implement the measurements of maximally mixed inputs are maximally mixed states themselves, without any loss of generality.

We define a new controlled unitary:

\begin{equation}
\mathcal{P'}_k =  \left( \prod_{\{i : i \notin D, (k-1)m + 1 \leq i \leq km-n\}} Z_i(-\delta_i)  \right) \prod_{\{i : i \notin D_k\}} \prod_{\{j : j \sim i, j \in D_k \}} Z_i(d_j)
\end{equation}

where $D_k$ denotes the set of dummies of subgraph $G'_k$.
%The lemma should be therefore more general without using any particular unitary

Using this unitary we rewrite the state. If $k \neq t_g$ it becomes:

\begin{multline}
\sum_{\nu_{C_k}} p(\nu_{C,k})  \mathcal{P'}^{\dagger} \mathcal{P'}  ( \bigotimes_{i=km-n+1}^{km} \ket{q_i}\bra{q_i} \bigotimes_{i=(k-1)m+n+1}^{km-n} \left( \ket{\delta_{i}^{\boldsymbol{b'},\nu}}\bra{\delta_{i}^{\boldsymbol{b'},\nu}} \otimes \ket{q_i^{\nu}}\bra{q_i^{\nu}} \right) \nonumber \\
 \bigotimes_{i=1}^{2} \left( \ket{\delta_{p_{i,k}}^{\boldsymbol{b'},\nu}}\bra{\delta_{p_{i,k}}^{\boldsymbol{b'},\nu}} \otimes \ket{q_{p_{i,k}}^{\nu}} \bra{q_{p_{i,k}}^{\nu}} \right)  \otimes I_{4(n-2)}/2^{4(n-2)} ) \mathcal{P'}^{\dagger} \mathcal{P'} \nonumber
\end{multline}

Otherwise:

\begin{multline}
\sum_{\nu_{C,k}} p(\nu_{C,k})  \mathcal{P'}^{\dagger} \mathcal{P'}  \bigotimes_{i=t_g m-n+1}^{t_g m} \ket{q_i}\bra{q_i} \bigotimes_{i=(t_g -1)m+n+1}^{t_g m-n} \left( \ket{\delta_{i}^{\boldsymbol{b'},\nu}}\bra{\delta_{i}^{\boldsymbol{b'},\nu}} \otimes \ket{q_i^{\nu}}\bra{q_i^{\nu}} \right) \nonumber \\
   \otimes \ket{\delta_{u}^{\theta_{u}, r_{u}}}\bra{\delta_{u}^{\theta_{u}, r_{u}}} \otimes \ket{q_{u}^{\theta_{u}}}\bra{q_{u}^{\theta_{u}}} \otimes I_{4(w-1)}/2^{4(w-1)} )  \mathcal{P'}^{\dagger} \mathcal{P'} \nonumber
\end{multline}

After applying the innermost unitary, if $k \neq t_g$:

\begin{multline}
\sum_{\nu_{C,k}} p(\nu_{C,k})  \mathcal{P'}^{\dagger} ( \bigotimes_{i=km-n+1}^{km} \ket{q'_i}\bra{q'_i} \bigotimes_{i=(k-1)m+n+1}^{km-n} \left( \ket{\delta_{i}^{\boldsymbol{b'},\nu}}\bra{\delta_{i}^{\boldsymbol{b'},\nu}} \otimes \ket{q_i^{'\nu}}\bra{q_i^{'\nu}} \right) \nonumber \\
 \bigotimes_{i=1}^{2} \left( \ket{\delta_{p_{i,k}}^{\boldsymbol{b'},\nu}}\bra{\delta_{p_{i,k}}^{\boldsymbol{b'},\nu}} \otimes \ket{q_{p_{i,k}}^{'\nu}} \bra{q_{p_{i,k}}^{'\nu}} \right)  \otimes I_{4(n-2)}/2^{4(n-2)} )  \mathcal{P'} \nonumber
\end{multline}

where state $\ket{q'_i}$ is defined as:

\begin{align*}
i \in D &  & \ket{q'_i}&\equiv\ket{d_i} \\
i \notin D, \forall k: k m \geq i \geq k m-n+1 & & \ket{q'_i}&\equiv \ket{+_{\theta _{i}}} \\
i \notin D, \forall k: k m-n \geq i \geq (k-1)m+1 &  & \ket{q'_i}&\equiv \ket{+_{-a_{i}^{''\text{  } \boldsymbol{b'},\boldsymbol{r}_{<i}} - r_i \pi }} \\
\end{align*}

Otherwise, if $k = t_g$:

\begin{multline}
\sum_{\nu_{C,k}} p(\nu_{C,k})  \mathcal{P'}^{\dagger} ( \bigotimes_{i=t_g m-n+1}^{t_g m} \ket{q'_i}\bra{q'_i} \bigotimes_{i=(t_g -1)m+n+1}^{t_g m-n} \left( \ket{\delta_{i}^{\boldsymbol{b'},\nu}}\bra{\delta_{i}^{\boldsymbol{b'},\nu}} \otimes \ket{q_i^{'\nu}}\bra{q_i^{'\nu}} \right) \nonumber \\
   \otimes \ket{\delta_{u}^{\theta_{u}, r_{u}}}\bra{\delta_{u}^{\theta_{u}, r_{u}}} \otimes \ket{q_{u}^{' \theta_{u}}}\bra{q_{u}^{' \theta_{u}}} \otimes I_{4(w-1)}/2^{4(w-1)} )  \mathcal{P'} \nonumber
\end{multline}

It is essential for the proof that each term with index $i$ in the tensor product depends only on parameters with index $\leq i$ and the term with index $(t_g-1)m+1$ (input qubit) and the trap qubit and its  measurement angle (if it is not an output) depend only on their own parameters. This allows to break the summations and calculate them iteratively from left to right, given the following:

\begin{equation}
 \sum_{d_i} p(d_i) \ket{d_i}\bra{d_i} = \frac{I}{2} \nonumber
\end{equation}

\begin{equation}
 \sum_{\theta _{i}} p(\theta _{i})  \ket{+_{\theta _{i}}} \bra{+_{\theta _{i}}} =  \frac{I}{2} \nonumber
\end{equation}

\begin{equation}
 \sum_{\theta _{i},r_i, d_i} p(\theta _{i},r_i, d_i)  \ket{\delta_{i}^{\boldsymbol{b'},\nu}}\bra{\delta_{i}^{\boldsymbol{b'},\nu}} \otimes \ket{d_i}\bra{d_i} = \frac{I_4}{2^4} \nonumber
\end{equation}

\begin{gather}
 \sum_{\theta _{i},r_i} p(\theta _{i},r_i)  \ket{\delta_{i}^{\boldsymbol{b'},\nu}}\bra{\delta_{i}^{\boldsymbol{b'},\nu}} \otimes \ket{+_{-a_{i}^{''\text{  } \boldsymbol{b'},\boldsymbol{r}_{<i}} - r_i \pi }}\bra{+_{-a_{i}^{''\text{  } \boldsymbol{b'},\boldsymbol{r}_{<i}} - r_i \pi }}  \nonumber \\
\begin{split}
= \sum_{r_i} p(r_i) \left( \sum_{\theta _{i}} p(\theta _{i})  \ket{a_{i}^{''\text{  } \boldsymbol{b'},\boldsymbol{r}_{<i}} + \theta_i + r_i \pi }\bra{a_{i}^{''\text{  } \boldsymbol{b'},\boldsymbol{r}_{<i}} + \theta_i + r_i \pi } \right) \nonumber \\
\otimes \ket{+_{-a_{i}^{''\text{  } \boldsymbol{b'},\boldsymbol{r}_{<i}} - r_i \pi }}\bra{+_{-a_{i}^{''\text{  } \boldsymbol{b'},\boldsymbol{r}_{<i}} - r_i \pi }} \nonumber  \end{split}\\
= \sum_{r_i} p(r_i) \frac{I_3}{2^3}\otimes \ket{+_{-a_{i}^{''\text{  } \boldsymbol{b'},\boldsymbol{r}_{<i}} - r_i \pi }}\bra{+_{-a_{i}^{''\text{  } \boldsymbol{b'},\boldsymbol{r}_{<i}} - r_i \pi }} \nonumber \\
 = \frac{I_4}{2^4} \nonumber
\end{gather}

where $I_n = \bigotimes_n I$. The last step was possible because each corrected computation angle $a''_{i}$ depends only on past $r$'s.

And finally (for $u=(t_g-1)m+1$),

\begin{gather}
\sum_{\theta _{u},r_u} p(\theta _{u},r_u) \ket{\delta_{u}^{\theta_u, r_u}}\bra{\delta_{u}^{\theta_u, r_u}} \otimes  \ket{+_{-a_{u}^{'} - r_u \pi }}\bra{+_{-a_{u}^{'} - r_u \pi }}  \nonumber \\
\begin{split}
= \sum_{r_u} p(r_u) \left( \sum_{\theta _{u}} p(\theta _{u})  \ket{a_{u}^{'} + \theta_u + r_u \pi }\bra{a_{i}^{'} + \theta_u + r_u \pi } \right) \nonumber \\
 \otimes  \ket{+_{-a_{u}^{'} - r_u \pi }}\bra{+_{-a_{u}^{'} - r_u \pi }} \nonumber \end{split} \\
= \frac{I_4}{2^4} \nonumber
\end{gather}

For $k \neq t_g$, if $km \geq t_k \geq km-n+1$ the above procedure will eventually give:

\begin{gather}
 \mathcal{P'}^{\dagger} (\frac{I_{4m-3n-1}}{2^{4m-3n-1}} \otimes  \ket{+_{\theta _{t_k}}}  \bra{+_{\theta _{t_k}}} )  \mathcal{P'}  \nonumber \\
= \frac{I_{4m-3n-1}}{2^{4m-3n-1}} \otimes  \ket{+_{\theta _{t_k}}}  \bra{+_{\theta _{t_k}}}  \nonumber
\end{gather}

If $km-n \geq t_k \geq (k-1)m+1$ the above procedure will eventually give:

\begin{gather}
\mathcal{P'}^{\dagger} (\frac{I_{4m-3n-4}}{2^{4m-3n-4}} \otimes \ket{\delta_{t_k}^{\nu_T}}\bra{\delta_{t_k}^{\nu_T}} \otimes \ket{+_{r_{t_k} \pi }} \bra{+_{r_{t_k} \pi }} )  \mathcal{P'}  \nonumber \\
= \frac{I_{4m-3n-4}}{2^{4m-3n-4}} \otimes \ket{\delta_{t_k}^{\nu_T}}\bra{\delta_{t_k}^{\nu_T}} \otimes \ket{+_{\theta _{t_k}}}  \bra{+_{\theta _{t_k}}}  \nonumber
\end{gather}

And for $k = t_g$ the result will be: $\bigotimes_{4m-3n} I$, which concludes the proof.

\end{proof}

%%
%% Bibliography
%%

%% Either use bibtex (recommended), but commented out in this sample

%% .. or use bibitems explicitely

\end{document}